\newtheorem{theorem}{Theorem}
\newtheorem{corollary}{Corollary}
\newtheorem{lemma}{Lemma}
\newtheorem{proposition}{Proposition}
\theoremstyle{definition}
\theoremstyle{remark}
\newtheorem{remark}{Remark}
\newcommand{\F}{\mathbb{F}}
\newcommand{\e}{\varepsilon}
\newcommand{\bb}{\F_2}
\begin{document}\IEEEoverridecommandlockouts
\title{Polynomial complexity of polar codes for non-binary alphabets, key agreement and Slepian-Wolf coding}
\author{\IEEEauthorblockN{Jingbo Liu~~~~~~~~Emmanuel Abbe}
\thanks{Jingbo's work was supported in part by the National Science Foundation under Grants CCF-1116013 and the Air Force Office of Scientific Research under Grant FA9550-12-1-0196. Emmanuel's work was supported by NSF grant CIF-1706648. Parts of this work were presented in CISS 2014.}
\IEEEauthorblockA{Dept. of Electrical Eng., Princeton University, NJ 08544\\
\{jingbo,eabbe\}@princeton.edu}}
\maketitle
\begin{abstract}
We consider polar codes for memoryless sources with side information and show that the blocklength, construction, encoding and decoding complexities are bounded by a polynomial of the reciprocal of the gap between the compression rate and the conditional entropy. This extends the recent results of Guruswami and Xia to a slightly more general setting, which in turn can be applied to (1) sources with non-binary alphabets, (2) key generation for discrete and Gaussian sources, and (3) Slepian-Wolf coding and multiple accessing. In each of these cases, the complexity scaling with respect to the number of users is also controlled. In particular, we construct coding schemes for these multi-user information theory problems which achieve optimal rates with an overall polynomial complexity.
\end{abstract}
\IEEEpeerreviewmaketitle
\section{Introduction}
The original paper of Arikan \cite{ari1} introduces polar codes for binary input memoryless channels. It is shown that the encoding and decoding complexity is $O(n \log n)$, and while the code construction is a priori of exponential complexity, it is shown how it can be approximated using simulations. In \cite{tal2011construct}, the code construction is studied using lower and upper bounds on the polarized mutual informations which are efficiently computable. More recently, \cite{guruswami2013polar} with related developments in \cite{hassani2013polarization} provides a complete and refined analysis of these techniques to obtain a global complexity control for the block length, construction, encoding, and decoding complexity which is polynomial in the block length and in the reciprocal of the gap to capacity.

In a subsequent paper to \cite{ari1}, Arikan introduces polar codes for memoryless sources with side-information \cite{arikan2010source}. It is interesting to compare the generality of this paper with respect to the paper on channel polarization \cite{ari1}. While channel polarization does imply source polarization when the source does not have side-information (by taking an additive noise channel and using source-channel duality, see for example \cite{dublin}), channel polarization does not imply the source polarization with side-information. In particular, the latter setting requires an extension of the channel setting which requires a uniform input distribution, whereas the source setting with side-information does not have such as restriction. In this note, we fill in this gap and show that in the general setting of sources with side information, the results of Guruswami et Xia can be extended: within $\epsilon$ gap to the conditional entropy, there exist source polar codes whose block length/construction/encoding/decoding complexity are bounded by polynomials in $1/\epsilon$. This generalization is not difficult but is particularly interesting as it opens immediately to several other extensions: (i) the results are extended to the case where the source cardinality is a power of 2, suggested as future work in \cite{guruswami2013polar} (ii) the results are extended to a secret key generation setting \cite{koyluoglu2010polar,chou2013polar}, (iii) the results are extended to Slepian-Wolf coding \cite{arikan2010source} and multiple accessing \cite{csacsouglu2011polarization} using onion-peeling decoding.

In particular, for alphabets of size $2^m$, Slepian-Wolf coding and multiple accessing problems with $m$ users, it is shown that complexity scaling is also linear in the number of users. This an interesting feature compared to the schemes developed in \cite{abbe2012polar,polar_corr} for the same settings which rely on the joint decoding of the users, for which linear complexity in the number of users is not achieved. On the other hand, the onion-peeling approach only guarantees rates on the corner-point of the capacity regions, and requires time-sharing for other rates on the dominant face. Concerning the secret key agreement, we consider in this note only the case of a uniform marginal distribution for Alice, and leave the general setup of \cite{chou2013polar} for future work.
We show in addition that the method extends to correlated Gaussian sources, using an approximation method similar to (but not exactly the same as) the one used in the polar coding for the AWGN channel~\cite{abbe2011polar}.

\section{Source Polar Code Construction}
In this section we provide a source coding counterpart of \cite{guruswami2013polar} with side information. Although the main ingredients also exist in the literature in the source coding setting, we shall synthesize these results to show that with polar coding, the source coding block length can be bounded by a polynomial of the gap between compression rate and conditional entropy, while ensuring tractable encoding/decoding complexity.

Following \cite{guruswami2013polar}, we use the terms ``\emph{rough polarization}'', ``\emph{fine polarization}'' and ``\emph{degradation}'' for the three main ingredients in the polar code construction.
\subsection{Evolution of Source Bhattacharyya coefficient}
For correlated random variables $X,Y$ where $X\in\bb$, define the source Bhattacharyya coefficient as in \cite{arikan2010source}
\begin{align}
Z(X|Y):=&\sum_{y\in \mathcal{Y}}P_Y(y)\sqrt{P_{X|Y}(0|y)P_{X|Y}(1|y)}\\
=&\sum_{y\in \mathcal{Y}}\sqrt{P_{XY}(0,y)P_{XY}(1,y)}.
\end{align}
\begin{lemma}\label{lem4}
Suppose $(X_i,Y_i)$ are i.i.d. according to $P_{XY}$. We have
\begin{align}
Z(X_2|Y^2,X_1+X_2)=&Z(X|Y)^2;\label{ev1}\\
Z(X|Y)\sqrt{2-Z(X|Y)^2}\le &Z(X_1+X_2|Y^2)\label{ev2_1}\\
\le& 2Z(X|Y)-Z(X|Y)^2.\label{ev2}
\end{align}
\end{lemma}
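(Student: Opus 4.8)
The plan is to perform the single polarization step $(X_1,X_2)\mapsto(X_1+X_2,X_2)$, write each synthesized Bhattacharyya coefficient as an explicit sum, and reduce \eqref{ev1}, \eqref{ev2_1}, \eqref{ev2} to elementary scalar inequalities. Throughout set $\alpha(y):=P_{XY}(0,y)$, $\beta(y):=P_{XY}(1,y)$, and use the normalization $Z(X|Y)=2\sum_y\sqrt{\alpha(y)\beta(y)}$ (so that $Z\in[0,1]$ and the relations take the stated form); also note that $\{u,u+1\}=\{0,1\}$ in $\bb$ for every $u$.

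For \eqref{ev1}: conditioned on $X_1+X_2=u$ we have $X_1=u+X_2$, so the joint law factors as $P_{X_2,Y_1,Y_2,X_1+X_2}(x,y_1,y_2,u)=P_{XY}(x+u,y_1)P_{XY}(x,y_2)$, and the Bhattacharyya summand splits into $\sqrt{P_{XY}(u,y_1)P_{XY}(u+1,y_1)}\cdot\sqrt{P_{XY}(0,y_2)P_{XY}(1,y_2)}$. Since the first factor equals $\sqrt{\alpha(y_1)\beta(y_1)}$ for either value of $u$, summing over $(y_1,y_2,u)$ factorizes the whole expression into $2\big(\sum_y\sqrt{\alpha(y)\beta(y)}\big)^2$, which with the above normalization is exactly $Z(X|Y)^2$.

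For \eqref{ev2_1}--\eqref{ev2}: marginalizing $X_2$ gives $P_{X_1+X_2,Y^2}(0,y_1,y_2)=\alpha_1\alpha_2+\beta_1\beta_2$ and $P_{X_1+X_2,Y^2}(1,y_1,y_2)=\alpha_1\beta_2+\beta_1\alpha_2$, where $\alpha_i=\alpha(y_i)$, $\beta_i=\beta(y_i)$. The key algebraic fact is that the sum of these two quantities is $(\alpha_1+\beta_1)(\alpha_2+\beta_2)$ and their difference is $(\alpha_1-\beta_1)(\alpha_2-\beta_2)$, so their product equals $\tfrac14\big[(\alpha_1+\beta_1)^2(\alpha_2+\beta_2)^2-(\alpha_1-\beta_1)^2(\alpha_2-\beta_2)^2\big]$. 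Introducing $\mu(y):=\alpha(y)+\beta(y)=P_Y(y)$ and the bias $c(y):=(\alpha(y)-\beta(y))/(\alpha(y)+\beta(y))\in[-1,1]$, this reads $\tfrac14\mu(y_1)^2\mu(y_2)^2\big(1-c(y_1)^2c(y_2)^2\big)$, while $\sqrt{\alpha(y)\beta(y)}=\tfrac12\mu(y)\sqrt{1-c(y)^2}$. Hence, with $C,C_1,C_2$ i.i.d.\ copies of $c(Y)$, $Y\sim P_Y$, we get $Z:=Z(X|Y)=\mathbb{E}[\sqrt{1-C^2}]$ and $Z(X_1+X_2|Y^2)=\mathbb{E}[\sqrt{1-C_1^2C_2^2}]$, reducing the problem to the scalar claim $Z\sqrt{2-Z^2}\le\mathbb{E}[\sqrt{1-C_1^2C_2^2}]\le 2Z-Z^2$.

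Both bounds now concern i.i.d.\ $[0,1]$-valued variables $a:=\sqrt{1-C_1^2}$, $b:=\sqrt{1-C_2^2}$ with $\mathbb{E}[a]=\mathbb{E}[b]=Z$, via $1-C_1^2C_2^2=a^2+b^2-a^2b^2$. For the upper bound I would use the pointwise estimate $\sqrt{a^2+b^2-a^2b^2}\le a+b-ab$, which upon squaring is merely $2ab(1-a)(1-b)\ge0$, and then take expectations using independence to obtain $2Z-Z^2$. For the lower bound, observe that $h(a,b):=\sqrt{a^2+b^2-a^2b^2}=\sqrt{b^2+(1-b^2)a^2}$ is convex in $a$ for each fixed $b$ (and, symmetrically, convex in $b$ for each fixed $a$), and apply Jensen's inequality twice --- first in $b$, then in $a$ --- to get $\mathbb{E}[h]\ge\mathbb{E}\big[h\big(\sqrt{1-C_1^2},Z\big)\big]\ge h(Z,Z)=Z\sqrt{2-Z^2}$, with equality when $c(\cdot)$ is constant. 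The step I expect to demand the most care is the minus branch: spotting the sum/difference identity for $P_{X_1+X_2,Y^2}(0,\cdot)$ and $P_{X_1+X_2,Y^2}(1,\cdot)$ that collapses the double sum into $\mathbb{E}[\sqrt{1-C_1^2C_2^2}]$. Once it is available the two bounds are routine, whereas without it the sharp $\sqrt{2-Z^2}$ in \eqref{ev2_1} is easily lost --- a crude estimate such as $\sqrt{x+y}\ge(\sqrt{x}+\sqrt{y})/\sqrt{2}$ does not suffice.
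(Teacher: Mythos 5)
Your proof is correct, and it differs from the paper's in a few respects worth noting. The paper proves only the lower bound \eqref{ev2_1}, citing Arikan for \eqref{ev1} and \eqref{ev2}; you prove all three, and your verification of \eqref{ev1} (conditioning on $X_1+X_2=u$ and factorizing the summand) and of \eqref{ev2} (the pointwise bound $\sqrt{a^2+b^2-a^2b^2}\le a+b-ab$, equivalent to $2ab(1-a)(1-b)\ge 0$, followed by independence) are sound and make the lemma self-contained. For the lower bound your route is structurally the same double-Jensen argument as the paper's, but in different coordinates: the paper rewrites $Z(X_1+X_2|Y^2)$ as $\frac{Z^2}{2}\mathbb{E}_{y_1,y_2\sim p}\sqrt{f(y_1)^2+f(y_2)^2-4}$ with the tilted measure $p(y)\propto\sqrt{P_{XY}(0,y)P_{XY}(1,y)}$ and $f(y)=\sqrt{P_{XY}(0,y)/P_{XY}(1,y)}+\sqrt{P_{XY}(1,y)/P_{XY}(0,y)}$, and then applies Jensen twice; you stay with the original measure, parametrize by the conditional bias $C=c(Y)$, and apply Jensen twice to the function $\sqrt{a^2+b^2-a^2b^2}$, convex in each argument, with $a=\sqrt{1-C_1^2}$, $b=\sqrt{1-C_2^2}$. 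The two are related by $f=2/\sqrt{1-c^2}$ together with a change of measure whose density is proportional to $ab$, so both yield the sharp constant $\sqrt{2-Z^2}$; your version avoids the reweighting and any division by $P_{XY}(x,y)$ (so boundary cases with $\alpha(y)\beta(y)=0$ need no separate treatment), at the cost of the sum/difference identity you single out as the key step. One bookkeeping remark: the recursions in the lemma do require the normalization $Z(X|Y)=2\sum_y\sqrt{P_{XY}(0,y)P_{XY}(1,y)}$ that you adopt (Arikan's definition, and the one the paper's own proof implicitly uses through its leading factor $2$), even though the paper's displayed definition omits the factor $2$; and in your argument for \eqref{ev1} this factor also multiplies the sum over $(y_1,y_2,u)$, so that $Z(X_2|Y^2,X_1+X_2)=2\cdot 2\big(\sum_y\sqrt{\alpha\beta}\big)^2=\big(2\sum_y\sqrt{\alpha\beta}\big)^2=Z(X|Y)^2$, exactly as intended.
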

\begin{proof}
Equality (\ref{ev1}) and the second equality in (\ref{ev2}) are known; see for example \cite{arikan2010source}. We now prove the first inequality in (\ref{ev2}) with a similar technique as in \cite{guruswami2013polar}. Using the basic definition and upon rearranging, we find
\begin{align}
&Z(X_1+X_2|Y^2)\nonumber\\
=&2\sum_{y^2}\sqrt{P_{X_1+X_2,Y^2}(0,y^2)P_{X_1+X_2,Y^2}(0,y^2)}\\
=&2\sum_{y^2}\sqrt{P_{X_1Y_1}(0,y_1)P_{X_1Y_1}(1,y_1)P_{X_2Y_2}(1,y_2)P_{X_2Y_2}(0,y_2)}\nonumber\\
&\cdot\sqrt{f(y_1)^2+f(y_2)^2-4},
\end{align}
where we have defined $f(y):=\sqrt{\frac{P_{XY}(0,y)}{P_{XY}(1,y)}}+\sqrt{\frac{P_{XY}(1,y)}{P_{XY}(0,y)}}$.
Now let $p(y):=\frac{2}{Z(X|Y)}\sqrt{P_{XY}(0,y)P_{XY}(1,y)}$; by applying Jensen's inequality twice,
\begin{align}
&Z(X_1+X_2|Y^2)\nonumber\\
=&\frac{Z(X|Y)^2}{2}\mathbb{E}_{y_1,y_2\sim p(y)}\sqrt{f(y_1)^2+f(y_2)^2-4}\\
\ge&\frac{Z(X|Y)^2}{2}\sqrt{(\mathbb{E}_{y_1\sim p(y)}f(y_1))^2+(\mathbb{E}_{y_2\sim p(y)}f(y_2))^2-4}\\
\ge&Z(X|Y)\sqrt{2-Z(X|Y)^2}.
\end{align}
\end{proof}
Denote by $S$ the joint distribution of $X,Y$. Let $(X_1,Y_1),(X_2,Y_2)$ be i.i.d. according to $P_{XY}$. Define $S^+$ (resp. $S^-$) as the joint distribution of $X_2,(Y^2,X_1+X_2)$ (resp. $X_1+X_2,Y^2$). By Lemma \ref{lem4}, the evolution of source Bhattacharyya coefficients is similar to that of channel Bhattacharyya coefficients studied in \cite{guruswami2013polar}. Using these `$+$' and `$-$' operations, for $0\le i\le 2^n-1,N:=2^n$ we can define recursively the sequence of distributions $S^{(i)}_n:\bb\times(\mathcal{Y}^N\times\bb^i)$ via
\begin{align}\label{eq3}
S^{(i)}_{n+1}=\left\{\begin{array}{cc}
                      (S^{(\lfloor i/2\rfloor)}_n)^- & \textrm{$i$ is even} \\
                      (S^{(\lfloor i/2\rfloor)}_n)^+ & \textrm{$i$ is odd}
                    \end{array}
\right.
\end{align}
with the base distribution $S^{(0)}_0=S$. We use the shorthand notation $Z(S^{(i)}_n)$ for $Z(X|\bar{Y})$ where $\mathcal{X}=\bb$, $\mathcal{\bar{Y}}=\mathcal{Y}^N\times\bb^i$ and $(X,\bar{Y})$ is distributed according to $S^{(i)}_n$.
\subsection{Rough Polarization}
The name ``rough polarization'' is from the fact that the rate of polarization in this stage is not as fast as the fine polarization to be discussed later. The following result characterizes the speed of convergence for a type of supermartingale, which turns out to be very useful in the proof of rough polarization:
\begin{lemma}\cite[Lemma 7]{guruswami2013polar}\label{lem2}
Suppose $B_0,B_1,\dots$ is a sequence of i.i.d. $Ber(0.5)$ random variables. A supermartingale with respect to the filtration $\sigma(B_0^n)$ satisfies
\begin{align}
p_0&=p\in(0,1),\\
p_n&=p_{n-1}^2,\textrm{if $B_n=1$},\\
p_{n-1}\sqrt{2-p_{n-1}^2}\le p_n&\le 2p_{n-1}-p_{n-1}^2,\textrm{if $B_n=0$}.\label{e3}
\end{align}
Then $\mathbb{E}[(p_n(1-p_n))^{1/2}]\le \frac{1}{2}\Lambda^n$ for some $0<\Lambda<1$.
\end{lemma}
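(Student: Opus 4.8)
The plan is to prove a one‑step expected contraction for the potential $h(q):=\sqrt{q(1-q)}$ and then iterate. Concretely, I would show there is a constant $\Lambda\in(0,1)$ with
\[
\mathbb{E}\big[h(p_n)\,\big|\,\sigma(B_0^{n-1})\big]\le \Lambda\, h(p_{n-1})\qquad\text{a.s.}
\]
Granting this, $h(p_{n-1})$ is $\sigma(B_0^{n-1})$‑measurable, so the tower property gives $\mathbb{E}[h(p_n)]\le\Lambda\,\mathbb{E}[h(p_{n-1})]$ and hence $\mathbb{E}[h(p_n)]\le\Lambda^{n}h(p)\le\tfrac12\Lambda^{n}$, using $h\le\tfrac12$ on $[0,1]$; this is exactly the claim. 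Note the stated supermartingale property itself is not needed as a separate input: it already follows from the recursion, since $\tfrac12 p_{n-1}^2+\tfrac12(2p_{n-1}-p_{n-1}^2)=p_{n-1}$.

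For the one‑step bound, fix $q\in(0,1)$ and condition additionally on $B_n$, which is uniform on $\{0,1\}$ and independent of $\sigma(B_0^{n-1})$. On $\{B_n=1\}$ we have $p_n=q^2$ exactly, contributing $\tfrac12 h(q^2)=\tfrac12 h(q)\sqrt{q(1+q)}$. On $\{B_n=0\}$ we only know $p_n\in I_q:=[\,q\sqrt{2-q^2},\,2q-q^2\,]$, so I would bound that branch by the worst case $\sup_{p\in I_q}h(p)$. Since $h(p)=\sqrt{\tfrac14-(p-\tfrac12)^2}$ is concave and symmetric about $p=\tfrac12$, this supremum equals $\tfrac12$ when $\tfrac12\in I_q$ and otherwise equals $h$ evaluated at whichever endpoint of $I_q$ is nearer $\tfrac12$. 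It therefore suffices to show that
\[
\phi(q):=\tfrac12\sqrt{q(1+q)}+\frac{1}{2\,h(q)}\sup_{p\in I_q}h(p)
\]
satisfies $\Lambda:=\sup_{q\in(0,1)}\phi(q)<1$ (at $q\in\{0,1\}$ the contraction reads $0\le\Lambda\cdot 0$ and is trivial).

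To estimate $\phi$ I would split $(0,1)$ into three regimes according to the position of $I_q$ relative to $\tfrac12$: (a) $q\le 1-\tfrac1{\sqrt2}$, where $I_q\subseteq[0,\tfrac12]$ and $\phi(q)=\tfrac12\sqrt{q(1+q)}+\tfrac12\sqrt{(1-q)(2-q)}$; (b) $1-\tfrac1{\sqrt2}\le q\le\sqrt{1-\tfrac{\sqrt3}{2}}$, where $\tfrac12\in I_q$ and $\phi(q)=\tfrac12\sqrt{q(1+q)}+\tfrac1{4h(q)}$; and (c) $q\ge\sqrt{1-\tfrac{\sqrt3}{2}}$, where $I_q\subseteq[\tfrac12,1]$, so $\sup_{I_q}h=h(q\sqrt{2-q^2})$ and, using the identity $1-(q\sqrt{2-q^2})^2=(1-q^2)^2$, one obtains $\phi(q)=\tfrac12\sqrt{q(1+q)}+\tfrac12(1+q)\sqrt{\tfrac{(1-q)\sqrt{2-q^2}}{1+q\sqrt{2-q^2}}}$. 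In each regime $\phi$ is an explicit one‑variable function that elementary calculus shows is bounded away from $1$; the only delicate points are $q\to0^+$ and $q\to1^-$, where $h(q)\to0$ turns $\phi$ into a $0/0$ form whose limits must be computed directly and both equal $1/\sqrt2<1$.

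The main obstacle is exactly this last step — the piecewise optimization showing $\sup_q\phi(q)<1$; it is not one clean inequality. The crude simplifications one is tempted to make in regime (c) are too lossy near $q=1$: bounding $1-p_n\le 1-q\sqrt{2-q^2}\le(1-q^2)^2$ (hence $h(p_n)\le 1-q^2$), or merely using $p_n\ge q$ so $h(p_n)\le h(q)$, each makes $\phi(q)$ exceed $1$ (the latter gives $\phi(q)\to\tfrac12\sqrt2+\tfrac12>1$ as $q\to1$); one genuinely needs the sharp endpoint $p_n\ge q\sqrt{2-q^2}$ for large $q$ and, symmetrically, $p_n\le 2q-q^2$ for small $q$. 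The structural reason the estimate succeeds is the exact symmetry $h(q)=h(1-q)$ together with the fact that $Z^+$ squares while $Z^-$ obeys $1-Z^-\ge(1-Z)^2$, so the mass of $p_n$ near both $0$ and $1$ is contracted at each step; numerically the worst case is near $q\approx0.6$–$0.7$ with $\phi\approx0.93$, so any $\Lambda\in(0.93,1)$ works.
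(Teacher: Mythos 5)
You should first note that the paper itself does not prove this lemma: it is quoted verbatim from Guruswami--Xia (their Lemma 7), so the only benchmark is that cited proof, and your route is essentially the same one — a one-step contraction of the potential $h(q)=\sqrt{q(1-q)}$, handling the plus branch exactly via $h(q^2)=h(q)\sqrt{q(1+q)}$ and the minus branch by the worst case over the interval $I_q=[q\sqrt{2-q^2},\,2q-q^2]$, then iterating with the tower property. Your bookkeeping is correct: the supermartingale hypothesis is indeed redundant, the regime boundaries $1-\tfrac1{\sqrt2}$ and $\sqrt{1-\tfrac{\sqrt3}{2}}$ are right, the identity $1-(q\sqrt{2-q^2})^2=(1-q^2)^2$ and the regime-(c) expression for $\phi$ check out, and the limits $\phi(q)\to 1/\sqrt2$ at both ends are correct (numerically the maximum is about $0.92$ near $q\approx 0.67$, matching your estimate).

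The one genuine shortfall is that the decisive step — $\Lambda:=\sup_{q\in(0,1)}\phi(q)<1$ — is the entire content of the lemma, and you leave it at ``elementary calculus shows'' plus numerics; as written this is an incomplete proof, not a wrong one. Two remarks that would close it more cleanly. First, you do not need the three-regime analysis of $\sup_{p\in I_q}h(p)$: since $p\le 2q-q^2$ and $1-p\le 1-q\sqrt{2-q^2}$ for every $p\in I_q$, one has the single bound $\sup_{p\in I_q}p(1-p)\le(2q-q^2)\bigl(1-q\sqrt{2-q^2}\bigr)$, which (using $1-q\sqrt{2-q^2}=\frac{(1-q^2)^2}{1+q\sqrt{2-q^2}}$) gives
\begin{align}
\phi(q)\le \tfrac12\sqrt{q(1+q)}+\tfrac12(1+q)\sqrt{\frac{(1-q)(2-q)}{1+q\sqrt{2-q^2}}},
\end{align}
a single explicit function whose supremum is still only about $0.94$; this is closer to the cited proof and avoids your case split while remaining sharp enough near both endpoints (your warning that cruder bounds such as $h(p_n)\le h(q)$ or $h(p_n)\le 1-q^2$ fail is correct). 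Second, whichever form you use, you must actually bound the supremum analytically — e.g.\ by a monotonicity argument on each of two subintervals or by a crude but rigorous interval/derivative estimate — rather than cite plotted values; once that is done, any $\Lambda$ strictly between the supremum and $1$ completes the induction exactly as you describe.
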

\begin{remark}
In the case of erasure channel, (\ref{e3}) can be replaced with the exact formula $p_n=2p_{n-1}-p_{n-1}^2$, and hence we can improve the result with $\mathbb{E}[q_n^{1/2}]\le \frac{1}{2}\left(\frac{3}{4}\right)^{n/2}$, see \cite{tanaka2010properties}.
\end{remark}
\begin{remark}
For an arbitrary channel, the value of $\Lambda$ can be as small as $1.85/2$ \cite{hassani2010scaling}.
\end{remark}
The proof of rough polarization of channel Bhattacharyya coefficients in the literature is essentially based on Lemma \ref{lem2}.
Now, by Lemma \ref{lem4}, the evolution of the source Bhattacharyya coefficients can also be thought of as the type of supermartingale considered in Lemma \ref{lem2}. Thus we obtain the following result about rough polarization of source Bhattacharyya coefficients. The proof is omitted since it is similar to the proof of rough polarization of channel Bhattacharyya coefficients (c.f. \cite[Proposition 5]{guruswami2013polar}).
\begin{proposition}
For joint distribution $p_{XY}$ with $\mathcal{X}=\bb$ and $\rho\in(\Lambda^2,1)$ (where $\Lambda$ is as in Lemma \ref{lem2}), there is a constant $b_{\rho}$ which only depends on $\rho$ such that for all $0<\epsilon<\frac{1}{2}$, and $m\ge b_{\rho}\log(1/\epsilon)$, there exists a roughly polarized set
\begin{align}
\mathcal{S}_r\subset\mathcal{S}:=\{S^{(i)}_m:0\le i\le 2^m-1\}
\end{align}
such that for all $M\in\mathcal{S}_r$, $Z(M)\le 2\rho^m$ and ${\rm Pr}_i(S^{(i)}_m\in \mathcal{S}_r)\ge I(X;Y)-\epsilon$.
\end{proposition}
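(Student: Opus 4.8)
\emph{Proof plan.} I would follow the randomized-process scheme of \cite{guruswami2013polar}, now applied to the source Bhattacharyya coefficient. Let $B_1,B_2,\dots$ be i.i.d.\ $\mathrm{Ber}(1/2)$ and put $Z_n:=Z(S^{(I_n)}_n)$, where $I_n\in\{0,\dots,2^n-1\}$ has binary digits $B_1,\dots,B_n$; since the $S^{(i)}_m$ are equiprobable in $i$, every ``fraction of indices'' statement is a probability over the $B$'s, so it suffices to bound those probabilities. After dismissing the trivial cases $Z(X|Y)\in\{0,1\}$, Lemma~\ref{lem4} shows that $Z_n$ obeys \emph{exactly} the recursion of Lemma~\ref{lem2} (the $B_n=1$ branch gives $Z_n=Z_{n-1}^2$, the $B_n=0$ branch gives the two-sided bound $Z_{n-1}\sqrt{2-Z_{n-1}^2}\le Z_n\le 2Z_{n-1}-Z_{n-1}^2$), and $\mathbb{E}[Z_n\mid B_1,\dots,B_{n-1}]\le\tfrac12 Z_{n-1}^2+\tfrac12(2Z_{n-1}-Z_{n-1}^2)=Z_{n-1}$ makes $Z_n$ a supermartingale; hence Lemma~\ref{lem2} applies and $\mathbb{E}\big[\sqrt{Z_m(1-Z_m)}\big]\le\tfrac12\Lambda^m$.

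The heart of the argument is then a single Markov step applied \emph{at the scale $\rho^{m/2}$} rather than $\rho^m$: ${\rm Pr}\big(Z_m(1-Z_m)\ge\rho^m\big)\le\frac{\Lambda^m}{2\rho^{m/2}}=\tfrac12(\Lambda/\sqrt{\rho})^m$, which is $\le\epsilon/2$ as soon as $m\ge b_\rho\log(1/\epsilon)$ — here $\Lambda/\sqrt{\rho}<1$ is exactly equivalent to the hypothesis $\rho>\Lambda^2$, and this is the only place that hypothesis is used. Since $\sqrt{1-4x}\ge1-4x$, the smaller root of $z(1-z)=\rho^m$ is $\le2\rho^m$, so $z(1-z)<\rho^m$ forces $z<2\rho^m$ or $z>1-2\rho^m$; thus with probability $\ge1-\epsilon/2$ one has $Z_m<2\rho^m$ or $Z_m>1-2\rho^m$.

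To rule out the positions with $Z_m$ near $1$ I would pass to the conditional-entropy process $H_n$, defined exactly as $Z_n$ but with $H(X\mid\bar{Y})$ in place of the Bhattacharyya coefficient. This $H_n$ is a bounded martingale because the polar transform conserves total conditional entropy (chain rule together with bijectivity of $(X_1,X_2)\mapsto(X_1+X_2,X_2)$ on $\bb^2$ and independence of the two i.i.d.\ source pairs); in particular $\mathbb{E}[H_m]=H(X|Y)$. Combining the pointwise bound $H_2(p)\ge 4p(1-p)$ (for the binary entropy function $H_2$) with Jensen's inequality gives $H_n\ge Z_n^2$, so on $\{Z_m>1-2\rho^m\}$ we have $H_m\ge1-4\rho^m$, and Markov yields ${\rm Pr}(Z_m>1-2\rho^m)\le H(X|Y)/(1-4\rho^m)\le H(X|Y)+\epsilon/2$ once $m\ge b_\rho\log(1/\epsilon)$. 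Together with the previous step this gives ${\rm Pr}(Z_m<2\rho^m)\ge1-H(X|Y)-\epsilon$; and since $\mathcal{X}=\bb$ forces $1-H(X|Y)=(1-H(X))+I(X;Y)\ge I(X;Y)$, we obtain ${\rm Pr}(Z_m<2\rho^m)\ge I(X;Y)-\epsilon$. Taking $\mathcal{S}_r:=\{S^{(i)}_m:Z(S^{(i)}_m)\le2\rho^m\}$ and letting $b_\rho$ be the maximum of the finitely many $\rho$-dependent constants above completes the proof.

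I expect the main obstacle to be the quantitative bookkeeping around the two exponential scales, not any single inequality: one must notice that Markov should be used at $\rho^{m/2}$ (which is precisely why $\rho>\Lambda^2$, and not merely $\rho>\Lambda$, is the natural threshold), and that the ``$Z_m\approx1$'' mass has to be tamed through entropy conservation — the supermartingale $Z_n$ by itself gives no lower bound on its mass near $1$. Apart from this, the argument is a transcription of \cite[Proposition~5]{guruswami2013polar}; the only source-specific checks are that Lemma~\ref{lem4} reproduces the recursion of Lemma~\ref{lem2} verbatim and that $H_n$ is a mean-conserving martingale, both of which follow from the independence of the i.i.d.\ source pairs.
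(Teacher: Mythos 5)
Your proposal is correct and is essentially the argument the paper invokes when it omits the proof as ``similar to \cite[Proposition 5]{guruswami2013polar}'': Lemma~\ref{lem4} feeds the supermartingale of Lemma~\ref{lem2}, Markov's inequality at scale $\rho^{m/2}$ (where $\rho>\Lambda^2$ enters) localizes $Z_m$ near $0$ or $1$, and a bounded martingale argument kills the mass near $1$. Your one substantive adaptation---replacing the symmetric-capacity martingale of the channel setting by the conditional-entropy martingale with $H\ge Z^2$ and then using $1-H(X|Y)\ge I(X;Y)$ for possibly non-uniform $X$---is precisely the source-specific modification the omitted proof requires, so this is the same route, correctly carried out.
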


\subsection{Fine Polarization}
The rough polarization stage produces a set of size nearly $I(X;Y)N$ in which the source Bhattacharyya coefficients are moderately small. They are not small enough to show vanishing probability of decoding error. However, they are small enough such that just by tracking the upper bounds in (\ref{ev1}) and (\ref{ev2}) (which corresponds to the so called extremal process), we can determine a large fraction of very small Bhattacharyya coefficients originated from that set as the branching process goes on. This idea is originally proposed in \cite{arikan2009rate}.

Since the fine polarization stage only depends on the extremal process, there is not much new work to be done to obtain a source coding counterpart of \cite[Proposition 10]{guruswami2013polar}. The following fixes a small error in the proof of \cite[Proposition 10]{guruswami2013polar}:

\begin{lemma}
Given $\gamma>0$, $\beta\in(0,\frac{1}{2})$ and $\rho\in (0,1)$, there is a constant $\theta(\beta,\gamma,\rho)$ such that for all $0<\epsilon<\frac{1}{2}$, if $m>\theta(\beta,\gamma,\rho)\cdot \log(2/\epsilon)$ then
\begin{align}\label{elem}
\frac{4\gamma}{\lg(1/\rho)}\exp\left(-\frac{(1-2\beta)^2m\lg(1/\rho)}{8}\right)<\frac{\epsilon}{2}.
\end{align}
\end{lemma}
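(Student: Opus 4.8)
The plan is to take logarithms and reduce (\ref{elem}) to a single linear inequality in $m$. Abbreviate the left-hand side as $A\exp(-cm)$, where $A:=\frac{4\gamma}{\lg(1/\rho)}>0$ and $c:=\frac{(1-2\beta)^2\lg(1/\rho)}{8}>0$ are constants depending only on $\beta,\gamma,\rho$ (here $\lg(1/\rho)>0$ since $\rho\in(0,1)$). Then $A\exp(-cm)<\epsilon/2$ is equivalent, after taking natural logarithms, to $cm>\log(1/\epsilon)+\log(2A)$. So the whole task is to exhibit $\theta(\beta,\gamma,\rho)$ such that $m>\theta\log(2/\epsilon)$ forces $cm>\log(1/\epsilon)+\log(2A)$.

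First I would absorb the additive constant $\log(2A)$ into a multiplicative one, using the hypothesis $\epsilon<\frac12$, which guarantees $\log(2/\epsilon)\ge\log 4>\log 2>0$; in particular $\log(2/\epsilon)$ is bounded away from zero. Since also $\log(1/\epsilon)\le\log(2/\epsilon)$, one gets
\[
\log(1/\epsilon)+\log(2A)\ \le\ \Bigl(1+\frac{\max\{0,\log(2A)\}}{\log 2}\Bigr)\log(2/\epsilon).
\]
Hence it suffices to take
\[
\theta(\beta,\gamma,\rho):=\frac{8}{(1-2\beta)^2\lg(1/\rho)}\Bigl(1+\frac{\max\{0,\log(8\gamma/\lg(1/\rho))\}}{\log 2}\Bigr),
\]
which depends only on $\beta,\gamma,\rho$: then $m>\theta\log(2/\epsilon)$ gives $cm>\bigl(1+\frac{\max\{0,\log(2A)\}}{\log 2}\bigr)\log(2/\epsilon)\ge\log(1/\epsilon)+\log(2A)$, and unwinding the logarithms yields (\ref{elem}).

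There is no genuinely hard step here; this is a routine estimate. The only point demanding a little care is the sign of the constant $\log(2A)=\log\!\bigl(8\gamma/\lg(1/\rho)\bigr)$: if it is nonpositive it can simply be discarded, whereas if it is positive it must be dominated by $\log(2/\epsilon)$, and this is exactly where the restriction $\epsilon<\frac12$ is used, so that $\log(2/\epsilon)$ stays bounded below by an absolute constant. (If a bound of the form $m>\theta'\log(1/\epsilon)$ without the shift were wanted, the identical argument applies, using the cruder lower bound $\log(1/\epsilon)>\log 2$.)
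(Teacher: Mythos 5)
Your proof is correct and follows essentially the same route as the paper: rewrite the bound as $c_1\exp(-c_2 m)<\epsilon$, take logarithms to get a linear condition on $m$, and use $\epsilon<\tfrac12$ (so $\log(2/\epsilon)$ is bounded below by an absolute constant) to absorb the additive constant $\log(2A)$ into the multiplicative constant $\theta(\beta,\gamma,\rho)$. The only difference is cosmetic: you package the constant as $\frac{1}{c}\bigl(1+\frac{\max\{0,\log(2A)\}}{\log 2}\bigr)$ while the paper takes a maximum of two explicit terms.
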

\begin{proof}
Viewing (\ref{elem}) in the form $c_1\exp(-c_2 m)<\epsilon$, where $c_1=\frac{8\gamma}{\lg(1/\rho)}$, $c_2=\frac{(1-2\beta)^2\lg(1/\rho)}{8}$, we see it suffices to set $
\theta=\max\left\{\frac{\log(2c_1)}{c_2\log4},\frac{1}{c_2}\right\}.
$
\end{proof}
Accordingly, we can define $c_{\rho}=\lceil\frac{4n}{m\lg(2/\rho)}\rceil$ in the proof of Proposition 10 in \cite{guruswami2013polar}. Then, the step above equation (22) in their paper can be replaced by $\lg Z(M_n^{(i)})\le -\frac{m 2^{n\beta}}{5}\lg(1/\rho)$. We then obtain a source version of fine polarization.
\begin{proposition}
Given $\epsilon\in (0,\frac{1}{2})$, a joint distribution $P_{XY}$ with $\mathcal{X}=\bb$, a parameter $\delta\in(0,\frac{1}{2})$, there exists a constant $c_{\delta}$ such that if $n_0>c_{\delta}\log(1/\epsilon)$ then
\begin{align}
{\rm Pr}_i[Z(S_{n_0}^{(i)})\le 2^{-2^{\delta n_0}}]\ge I(X;Y)-\epsilon.
\end{align}
\end{proposition}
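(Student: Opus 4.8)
The plan is to combine the two polarization stages already in hand—rough polarization (Proposition~1) and fine polarization driven by the extremal process (the discussion around \eqref{elem} and the reference to \cite[Proposition~10]{guruswami2013polar})—into a single statement about a block length $n_0$ that is only a constant multiple of $\log(1/\epsilon)$. First I would fix $\rho\in(\Lambda^2,1)$ and apply Proposition~1 with error parameter $\epsilon/2$: this gives a constant $b_\rho$ and an intermediate stage index $m\ge b_\rho\log(2/\epsilon)$ together with a roughly polarized set $\mathcal{S}_r\subset\{S^{(i)}_m\}$ on which $Z(M)\le 2\rho^m$ and whose relative size is at least $I(X;Y)-\epsilon/2$. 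The point of this first step is that $\rho^m$ is already exponentially small in $m$, hence polynomially small in $1/\epsilon$, which is exactly the ``moderately small'' regime needed to feed the fine polarization.

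Next I would run the branching process for a further $n_0-m$ steps starting from each $M\in\mathcal{S}_r$, tracking only the upper bounds \eqref{ev1} and \eqref{ev2} (the extremal process $q\mapsto q^2$ on a ``$+$'' step and $q\mapsto 2q-q^2\le 2q$ on a ``$-$'' step). The standard martingale/large-deviations argument for the extremal process—identical to the one behind \cite[Proposition~10]{guruswami2013polar}, now corrected by the lemma preceding this proposition—shows that, conditioned on starting from a node with $Z\le 2\rho^m$, a fraction $1-\epsilon/2$ of the descendants after $n_0-m$ further steps satisfy $\lg Z(S^{(i)}_{n_0})\le -\tfrac{m}{5}2^{(n_0-m)\beta}\lg(1/\rho)$ for any fixed $\beta\in(0,1/2)$. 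Choosing $\beta$ slightly below $1/2$ and $\delta<1/2$, and using that $m=\Theta(\log(1/\epsilon))$ while $n_0-m=\Theta(n_0)$ for $n_0$ a large enough constant times $\log(1/\epsilon)$, the exponent $\tfrac{m}{5}2^{(n_0-m)\beta}\lg(1/\rho)$ dominates $2^{\delta n_0}$, giving $Z(S^{(i)}_{n_0})\le 2^{-2^{\delta n_0}}$ on that fraction. The lemma on \eqref{elem} is precisely what guarantees that the "bad" contribution from the extremal-process deviation is at most $\epsilon/2$ once $m>\theta(\beta,\gamma,\rho)\log(2/\epsilon)$, where $\gamma$ is the constant controlling how many fine-polarization sub-blocks of length $m$ fit into $n_0$.

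Putting the two pieces together: the set of indices $i$ with $Z(S^{(i)}_{n_0})\le 2^{-2^{\delta n_0}}$ contains, up to the $\epsilon/2$ loss in the fine stage, the descendants of $\mathcal{S}_r$, whose measure is at least $I(X;Y)-\epsilon/2$; subtracting the two error terms gives measure at least $I(X;Y)-\epsilon$, which is the claimed bound. The constant $c_\delta$ is then obtained by taking $c_\delta$ large enough to absorb both $b_\rho$ (for the rough stage) and the $\theta(\beta,\gamma,\rho)$ from the fine-polarization lemma, together with the slack needed so that $2^{(n_0-m)\beta}$ overtakes $2^{\delta n_0}$; all of this depends only on $\delta$ once $\rho$ and $\beta$ are pinned down as fixed functions of $\delta$.

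The main obstacle—really the only non-bookkeeping point—is verifying the "matching of exponents": one must choose $\rho$, $\beta$, and the sub-block count so that the doubly-exponential rate $2^{-c\, m\, 2^{(n_0-m)\beta}}$ coming out of the extremal process is at least $2^{-2^{\delta n_0}}$ for \emph{every} admissible $n_0>c_\delta\log(1/\epsilon)$, not just asymptotically; this is where the corrected constant $c_\rho=\lceil 4n/(m\lg(2/\rho))\rceil$ and the replacement bound $\lg Z(M_n^{(i)})\le -\tfrac{m2^{n\beta}}{5}\lg(1/\rho)$ flagged in the text do their work. Everything else transfers verbatim from the channel-coding argument of \cite{guruswami2013polar}, since Lemma~\ref{lem4} shows the source Bhattacharyya coefficients obey exactly the same "$+$"/"$-$" recursion (with the same one-sided extremal bounds) as the channel ones.
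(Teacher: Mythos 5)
Your two-stage skeleton (rough polarization with budget $\epsilon/2$, then the extremal process driven by \eqref{ev1}--\eqref{ev2} with a Chernoff bound on the number of squarings, then exponent matching against $2^{\delta n_0}$) is indeed the same route the paper takes by deferring to \cite[Proposition 10]{guruswami2013polar} with the corrections stated before the proposition. The gap is in your parameter split: you pin the rough stage at $m=\Theta(\log(1/\epsilon))$ \emph{independently of $n_0$}, whereas the proposition must hold for every $n_0>c_\delta\log(1/\epsilon)$, in particular for $n_0$ arbitrarily large compared to $\log(1/\epsilon)$. With your split the fine-stage claim ``a fraction $1-\epsilon/2$ of descendants satisfy $\lg Z\le-\tfrac{m}{5}2^{(n_0-m)\beta}\lg(1/\rho)$'' is not delivered by the argument you cite. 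Knowing only the total number of squarings does not control the ordering: a ``$-$'' step genuinely pushes $Z$ up (by \eqref{ev2_1}, $Z^-\ge Z$), and the extremal upper-bound tracking only gives $\lg Z_{n_0}\le 2^{s}\bigl(\lg Z_m+d\bigr)$ with $d$ the number of ``$-$'' steps, which is vacuous as soon as $d$ exceeds the initial margin $\approx m\lg(1/\rho)$ --- unavoidable once $n_0-m\gg m$. The standard repair is to chunk the fine stage into sub-blocks of length proportional to $m$ (this is what your $\gamma$ alludes to), but with $m$ fixed and $n_0$ free, the number of sub-blocks is $\Theta\bigl((n_0-m)/m\bigr)$, growing with $n_0$, while each sub-block fails with probability only $\exp(-\Theta(m))=\mathrm{poly}(\epsilon)$; the union bound therefore is not $\le\epsilon/2$ uniformly over all admissible $n_0$.

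The paper (following the corrected Guruswami--Xia argument) avoids this by letting the rough stage scale with the total depth: the ratio $\gamma$ of fine-stage length $n$ to rough-stage length $m$ is a \emph{constant} chosen from $\delta$ (large enough that $\beta>\delta(1+1/\gamma)$ is available with $\beta<\tfrac12$), so $m\propto n_0$ and $m\ge b_\rho\log(2/\epsilon)$ is automatic once $c_\delta$ is large. This is exactly why the lemma's bound \eqref{elem} has the constant prefactor $4\gamma/\lg(1/\rho)$ (a bounded number of sub-blocks, cf.\ $c_\rho=\lceil 4n/(m\lg(2/\rho))\rceil$) and an exponent linear in $m$: with $m\propto n_0$ the deviation term is below $\epsilon/2$ uniformly in $n_0$, and the worst-case additive degradation per sub-block stays a fixed fraction of the running margin. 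Your exponent-matching step survives this change essentially unchanged; it is only the choice ``$m=\Theta(\log(1/\epsilon))$ fixed'' that must be replaced by ``$m$ a fixed fraction of $n_0$'' for the proof to close.
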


\subsection{Efficient Construction using Degradation}
From (\ref{eq3}), $S^{(i)}_n$ can be seen as a distribution on the set $\bb\times\mathcal{\bar{Y}}$, where $\mathcal{\bar{Y}}:=(\mathcal{Y}^N\times\bb^i)$. Since $\mathcal{\bar{Y}}$ may have a large cardinality, the construction of polar codes is not efficient if we have to exactly compute the Bhattacharyya coefficients from (\ref{eq3}). The ``binning'' or ``degradation'' method, originally proposed in \cite{tal2011construct}, is designed to overcome this computational barrier. The idea is to find $T$ such that $T$ is almost a sufficient statistic of $\bar{Y}$ for $X$, but $|\mathcal{T}|$ is much smaller than $|\mathcal{\bar{Y}}|$. The degradation method can be performed after each branching process (\ref{eq3}) in the rough polarization stage. There is no need to use degradation in the fine polarization stage since the distribution is no longer involved in that stage.

Suppose $P_{X\bar{Y}}$ is a joint distribution on $\mathcal{X}\times\mathcal{\bar{Y}}$ where $\mathcal{X}=\bb$. (We use $\bar{Y}$ to indicate that it is not the same as the side information $Y$ we defined earlier.) Partition $\mathcal{\bar{Y}}$ into sets $\mathcal{\bar{Y}}_{i,j}$, $i=1,\dots,k$, $j=0,1$ and $\mathcal{\bar{Y}}_{k+1}$ defined as
\begin{align}
\mathcal{\bar{Y}}_{i,j}=&\{y:P_{X|\bar{Y}=y}(j)>P_{X|\bar{Y}=y}(j+1),\nonumber\\
&\frac{i-1}{k}\le H(P_{X|\bar{Y}=y})<\frac{i}{k}\},\quad i=1,\dots,k,\label{eq26}\\
\mathcal{\bar{Y}}_{k+1}=&\{y:P_{X|\bar{Y}=y}(0)=P_{X|\bar{Y}=y}(1)\}\label{eq261}.
\end{align}
Let $T$ be a r.v. taking values in $\{1,\dots,k\}\times\bb\cup\{k+1\}$ such that $X-\bar{Y}-T$ and for all $y:P_{\bar{Y}}(y)>0$ we have $P_{T|\bar{Y}}(t|y)=1_{y\in\mathcal{\bar{Y}}_t}$.
Then using the same method as \cite{pedarsani2011construction}, we can show that
\begin{proposition}\label{prop5}
If $\widehat{S^{(i)}_n}$ is the resulting distribution when each `$+$' or `$-$' operation is followed by a degradation step, then
\begin{align}
H(S^{(i)}_n)\le H(\widehat{S^{(i)}_n})\le H(S^{(i)}_n)+\frac{n2^n}{k},
\end{align}
where $k$ is as in (\ref{eq26}), (\ref{eq261}) so that the number of bins is $2k+1$.
\end{proposition}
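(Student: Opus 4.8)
The plan is to reduce the statement to three facts about the binary polar transform and the degradation order. Write $Q\preceq P$ to mean that $Q_{X\bar{Z}}$ is stochastically degraded with respect to $P_{X\bar{Y}}$, i.e.\ that $X-\bar{Y}-\bar{Z}$ is a Markov chain for the appropriate coupling; then $H(Q)\ge H(P)$, since ``conditioning reduces entropy'' gives $H(X|\bar{Z})\ge H(X|\bar{Y},\bar{Z})=H(X|\bar{Y})$. The two structural facts I will use are: (a) degradation is preserved by the ``$+$'' and ``$-$'' operations, $Q\preceq P\Rightarrow Q^\pm\preceq P^\pm$, which is the source-coding form of the Tal--Vardy lemma and is proved exactly as in \cite{pedarsani2011construction,tal2011construct}; and (b) one polar step conserves conditional entropy, $H(P^-)+H(P^+)=2H(P)$, which follows from the chain rule because $(X_1+X_2,X_2)$ is a bijective image of $(X_1,X_2)$ and the two copies are independent. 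Iterating (b) gives $\sum_{i=0}^{2^n-1}H(S^{(i)}_n)=2^nH(S)$.

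First I would bound the cost of a single degradation step: for $T$ the quantization of $\bar{Y}$ defined in (\ref{eq26})--(\ref{eq261}) one has $P_{XT}\preceq P_{X\bar{Y}}$ (immediate, since $X-\bar{Y}-T$) and moreover $H(X|T)\le H(X|\bar{Y})+\tfrac{1}{k}$. For the latter, inside a bin $\mathcal{\bar{Y}}_{i,j}$ every posterior $P_{X|\bar{Y}=y}$ has its larger mass on symbol $j$ and binary entropy in $[\tfrac{i-1}{k},\tfrac{i}{k})$; as $P_{X|T=(i,j)}$ is a convex combination of these posteriors, its larger mass lies in the same interval of MAP-probabilities, hence $H(X|T=(i,j))<\tfrac{i}{k}$, while $\sum_{y\in\mathcal{\bar{Y}}_{i,j}}P_{\bar{Y}}(y)H(X|\bar{Y}=y)\ge\tfrac{i-1}{k}\,P_T((i,j))$. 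Summing the per-bin gap (the equiprobable bin $\mathcal{\bar{Y}}_{k+1}$ contributes $0$) yields $H(X|T)-H(X|\bar{Y})<\tfrac{1}{k}$.

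The lower bound $H(S^{(i)}_n)\le H(\widehat{S^{(i)}_n})$ then follows by induction on $n$ (with base case $\widehat{S^{(0)}_0}=S$): assuming $\widehat{S^{(\lfloor i/2\rfloor)}_{n-1}}\preceq S^{(\lfloor i/2\rfloor)}_{n-1}$, property (a) gives $(\widehat{S^{(\lfloor i/2\rfloor)}_{n-1}})^\pm\preceq S^{(i)}_n$, and the subsequent degradation only adjoins an element further down the order, so $\widehat{S^{(i)}_n}\preceq S^{(i)}_n$ and hence $H(\widehat{S^{(i)}_n})\ge H(S^{(i)}_n)$. For the upper bound, set $A_n:=\sum_{i=0}^{2^n-1}H(\widehat{S^{(i)}_n})$. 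Each pair $\widehat{S^{(2i)}_{n+1}},\widehat{S^{(2i+1)}_{n+1}}$ is obtained by applying ``$-$'' and ``$+$'' to $\widehat{S^{(i)}_n}$ followed by one degradation each, so fact (b) and the per-step bound give $H(\widehat{S^{(2i)}_{n+1}})+H(\widehat{S^{(2i+1)}_{n+1}})\le 2H(\widehat{S^{(i)}_n})+\tfrac{2}{k}$; summing over $i$ gives $A_{n+1}\le 2A_n+\tfrac{2^{n+1}}{k}$ with $A_0=H(S)$, hence $A_n\le 2^nH(S)+\tfrac{n2^n}{k}$. Comparing with $\sum_i H(S^{(i)}_n)=2^nH(S)$ shows the total excess $\sum_i(H(\widehat{S^{(i)}_n})-H(S^{(i)}_n))$ is at most $\tfrac{n2^n}{k}$, and since each summand is nonnegative by the lower bound, so is each individual difference, which is the claim.

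I expect the main obstacle to be the single-step entropy bound of the second paragraph: one must check carefully that a convex combination of posteriors taken from one entropy bin cannot leave the window $[\tfrac{i-1}{k},\tfrac{i}{k})$, which rests on the monotonicity of the binary entropy function in the MAP probability over $[\tfrac{1}{2},1]$, and one must make sure fact (a) is invoked in a form valid for arbitrary, possibly non-symmetric, side information, for which \cite{pedarsani2011construction} is the appropriate reference.
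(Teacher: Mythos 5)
Your proof is correct and follows essentially the same route the paper intends: the paper omits the details, citing the method of \cite{pedarsani2011construction}, which is exactly the combination you use — degradation preserved under the `$+$'/`$-$' transforms, conservation of conditional entropy, a per-step quantization loss of $1/k$ from the entropy bins of (\ref{eq26})--(\ref{eq261}), and a telescoping recursion over the $n$ levels yielding $n2^n/k$ (with the per-index bound following from nonnegativity of each difference). In particular, your single-step bound is precisely the ``cleaner analysis from quantizing in $h(p)$'' that the paper alludes to.
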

Note that in \cite{guruswami2013polar}, the quantization is uniform in the space of $p\in[0,1]$. Here we are quantizing in the space of $h(p)\in [0,1]$, which will yield a slightly better result and cleaner analysis.

Combining the rough polarization, fine polarization and degradation together, we have the following main result which links complexity with the gap to entropy:
\begin{theorem}\label{thm1}
There is a constant $0<\mu<\infty$ such that the following holds: let $P_{XY}$ be a joint distribution with $\mathcal{X}=\bb$. There exists $a_S<\infty$ such that for all $0<\epsilon<\frac{1}{2}$ and powers of two $N\ge a_S(1/\epsilon)^{\mu}$, there is a source polar code of block length $N$ and rate below $H(X|Y)+\epsilon$ with construction time complexity ${\rm poly}(N)$. The encoding and decoding algorithms have time complexity $O(N\log N)$ and the error probability is at most $2^{-N^{0.49}}$.
\end{theorem}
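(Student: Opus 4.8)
The plan is to assemble the code from the three ingredients above --- rough polarization, the source fine-polarization proposition, and the degradation of Proposition~\ref{prop5} --- together with the successive-cancellation (SC) encoder and decoder of \cite{ari1,arikan2010source}. Fix a target exponent $\delta\in(0,\tfrac12)$ slightly above $0.49$, say $\delta=0.495$, and write $n=\log_2 N$. By the source fine-polarization proposition applied with this $\delta$, once $n>c_\delta\log(1/\epsilon)$ --- which holds as soon as $N\ge a_S(1/\epsilon)^\mu$ for $\mu$ and $a_S$ large enough --- there is a set $\mathcal L\subseteq\{0,\dots,N-1\}$ with
\begin{align*}
Z(S_n^{(i)})\le 2^{-2^{\delta n}}=2^{-N^{\delta}}\quad\forall\, i\in\mathcal L,\qquad \frac{|\mathcal L|}{N}\ge 1-H(X|Y)-\epsilon,
\end{align*}
the density being pinned to this value by the conservation law $\sum_i H(S_n^{(i)})=NH(X|Y)$. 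The code is then: put $\mathcal H=\{0,\dots,N-1\}\setminus\mathcal L$; the encoder transmits $u_{\mathcal H}$, where $u^N=x^NG_N$ is the polar transform of the source block, and the decoder, given $(u_{\mathcal H},y^N)$, runs SC --- $\hat u_i=u_i$ on $\mathcal H$ and $\hat u_i=\arg\max_{u}P_{U_i|U^{i-1}Y^N}(u|\hat u^{i-1},y^N)$ on $\mathcal L$ --- and returns $\hat x^N=\hat u^NG_N$.

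By the density bound the rate is $|\mathcal H|/N\le H(X|Y)+\epsilon$. The standard SC estimate gives
\begin{align*}
\Pr[\hat X^N\ne X^N]\le \sum_{i\in\mathcal L}2\,Z(S_n^{(i)})\le 2N\cdot 2^{-N^{\delta}}\le 2^{-N^{0.49}},
\end{align*}
where the factor $2$ is the normalization constant relating the event ``SC first errs at coordinate $i$'' to $Z(S_n^{(i)})$, and the last inequality holds once $N$ exceeds a constant (absorbed into $a_S$), because $\delta>0.49$. Since $G_N$ and the SC recursion share the recursive butterfly structure of \cite{ari1}, encoding and decoding run in $O(N\log N)$.

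It remains to produce $\mathcal L$ in time $\mathrm{poly}(N)$, and this is where degradation enters, since $Z(S_n^{(i)})$ cannot be computed exactly --- the alphabet $\mathcal Y^N\times\bb^i$ is exponentially large. Run the recursion (\ref{eq3}) for the first $m=\lceil b_\rho\log(1/\epsilon)\rceil$ levels --- the rough-polarization regime --- following each step by the degradation of Proposition~\ref{prop5} with $2k+1$ bins, $k=\Theta(m2^m/\epsilon)$ chosen so that the total entropy inflation $\sum_i\bigl(H(\widehat{S_m^{(i)}})-H(S_m^{(i)})\bigr)$ is at most $\tfrac{\epsilon}{4}\,2^m$; since $2^m=(1/\epsilon)^{O(1)}$ and $1/\epsilon\le(N/a_S)^{1/\mu}$, both $k$ and the support sizes of the $\widehat{S_m^{(i)}}$ are $\mathrm{poly}(N)$, so this stage costs $\mathrm{poly}(N)$. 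For each level-$m$ index with $Z(\widehat{S_m^{(i)}})\le 2\rho^m$, run the scalar extremal recursion $z\mapsto z^2$ (odd children), $z\mapsto 2z-z^2$ (even children) for the remaining $n-m$ levels, and put a leaf into $\mathcal L$ exactly when its recursed value drops below $2^{-N^{\delta}}$; traversing the $\le N$ such leaves costs $O(N\log N)$. Because a degradation step only increases both $H$ and $Z$, the scalar recursion started from $Z(\widehat{S_m^{(i)}})\ge Z(S_m^{(i)})$ upper-bounds the true $Z(S_n^{(j)})$ of every descendant $j$, so every leaf in $\mathcal L$ does satisfy $Z(S_n^{(j)})\le 2^{-N^{\delta}}$ and the error bound persists; moreover $\mathbb E_i H(\widehat{S_m^{(i)}})\le H(X|Y)+\tfrac{\epsilon}{4}$, so rough polarization makes $H(\widehat{S_m^{(i)}})$ bimodal up to a coordinate set of density $O(\epsilon)$ and the level-$m$ nodes with $Z(\widehat{S_m^{(i)}})\le 2\rho^m$ have density $\ge 1-H(X|Y)-O(\epsilon)$, while fine polarization --- which uses only the deterministic bound $2\rho^m$ on the starting value --- pushes all but an $O(\epsilon)$ fraction of their descendants into $\mathcal L$; these losses being $O(\epsilon)$ each, $|\mathcal L|/N\ge 1-H(X|Y)-\epsilon$, completing the construction.

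The main obstacle is the bookkeeping that makes the constants compose, and it rests on two points. First, the level budget: rough polarization spends $m=\Theta(\log(1/\epsilon))$ of the $n=\log_2N$ levels, which must be a negligible fraction of $n$ so that the fine-polarization exponent over the remaining $n-m$ levels is still at least $0.49$ relative to $n$; after fixing $\delta=0.495$, $\rho\in(\Lambda^2,1)$, and $\beta<\tfrac12$ one reads off an admissible absolute constant $\mu=\mu(\delta,\rho,\beta)$ from $c_\delta,b_\rho$, the residual lower-order terms being absorbed into $a_S$, which may depend on $P_{XY}$. Second, the degradation rate--complexity tradeoff: Proposition~\ref{prop5} shows that $\Theta(m2^m/\epsilon)=\mathrm{poly}(1/\epsilon)$ bins already bring the average entropy within $\epsilon$ of $H(X|Y)$, so the construction stays polynomial while costing only $O(\epsilon)$ in rate --- and one must transfer the rough- and fine-polarization statements, stated for the exact process, to the degraded one, which is routine given the monotonicity of $H$ and $Z$ under degradation and the $\tfrac{\epsilon}{4}$ inflation budget. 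The remaining verifications are the standard SC analysis.
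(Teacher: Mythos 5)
Your architecture is exactly the one the paper intends (and inherits from \cite{guruswami2013polar}): rough polarization for $m=\Theta(\log(1/\epsilon))$ levels with a degradation step after each $\pm$ operation, selection of the roughly polarized indices, then the purely scalar extremal recursion $z\mapsto z^2$, $z\mapsto 2z-z^2$ for the remaining $n-m$ levels, SC encoding/decoding in $O(N\log N)$, and the union bound over the selected Bhattacharyya parameters. Two side remarks: your reading of the polarized fraction as $1-H(X|Y)-\epsilon$ (justified by $\sum_i H(S_n^{(i)})=NH(X|Y)$), rather than the $I(X;Y)-\epsilon$ literally written in the paper's polarization propositions, is the statement the theorem's rate claim actually needs (the two coincide only for uniform $X$), and your factor $2$ in the SC error bound is unnecessary but harmless.

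The genuine gap is in the transfer to the degraded process, which you call routine but which fails with your stated parameters. Proposition~\ref{prop5} controls only \emph{entropy}: with $k=\Theta(m2^m/\epsilon)$ bins the per-index guarantee is $H(\widehat{S^{(i)}_m})\le H(S^{(i)}_m)+\Theta(\epsilon)$. Entropy control at scale $\epsilon$ cannot certify $Z(\widehat{S^{(i)}_m})\le 2\rho^m$: the generic conversion is $Z\le\sqrt{H}$, so an index with true $Z(S^{(i)}_m)\le 2\rho^m$ is only guaranteed $Z(\widehat{S^{(i)}_m})\lesssim\sqrt{c\rho^m+\epsilon}\approx\sqrt{\epsilon}$, while $2\rho^m=\epsilon^{\Theta(1)}$ is typically far below $\sqrt{\epsilon}$; hence the set $\{i: Z(\widehat{S^{(i)}_m})\le 2\rho^m\}$ that you feed into the fine stage need not have density $1-H(X|Y)-O(\epsilon)$ (it can even be empty), and the rate bound collapses. (The error-probability side of your argument is unaffected, since the scalar recursion started from $Z(\widehat{S^{(i)}_m})\ge Z(S^{(i)}_m)$ does dominate the true leaves.) Nor can you instead run Lemma~\ref{lem2} on the degraded process directly: binning increases $Z$, so the `$+$' branch only gives $Z(\widehat{S}^+)\ge Z(\widehat{S})^2$, the wrong direction for the supermartingale. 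The repair is local and keeps everything polynomial: make the entropy slack comparable to $\rho^{\Theta(m)}$ rather than $\epsilon$, e.g.\ $k\gtrsim m2^m\rho^{-2m}$ (still ${\rm poly}(1/\epsilon)$, with exponent depending on $\rho$ and $b_\rho$), so that for the good indices $Z(\widehat{S^{(i)}_m})\le\sqrt{H(\widehat{S^{(i)}_m})}\le 2(\sqrt{\rho})^{m}$; then select with threshold $2(\sqrt{\rho})^{m}$ and invoke fine polarization with parameter $\sqrt{\rho}$ in place of $\rho$, which only changes $c_\delta$, $b_\rho$, and hence $\mu$ and $a_S$. With that adjustment your bookkeeping composes and the rest of the argument is sound.
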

\begin{remark}
In this theorem the constant $\mu$ is independent of the channel, whereas $a_S$ depends on the particular channel.
\end{remark}

\section{Extension to Non-binary Alphabets}\label{sec1}

From Theorem \ref{thm1}, one can design an `onion peeling' encoding scheme for sources with alphabet size of $2^m$, using the technique of polar coding for $m$-user MAC introduced in \cite{csacsouglu2011polarization,abbe2012polar}. The idea is to identify $X$ with its binary expansion $(X^{(1)},\dots,X^{(m)})$, where $|\mathcal{X}^{(i)}|=2$. Consider the expansion
\begin{align}\label{chain}
H(X|Y)=&H(X^{(1)}|Y)\nonumber\\
&+\dots+H(X^{(m)}|Y,X^{(1)},\dots,X^{(m-1)}).
\end{align}
If we encode and decode the $i$'th layer $(X_1^{(i)},\dots,X_N^{(i)})$ in the order $i=1,\dots,m$, then by Theorem~\ref{thm1} with a union bound ensures a low probability of incorrect decoding. The encoding rate will also be close to $H(X|Y)$ because of (\ref{chain}). More precisely, we have
\begin{corollary}\label{cor1}
There is a constant $0<\mu<\infty$ such that the following holds: let $P_{XY}$ be a joint distribution with $\mathcal{X}=\bb^m$. There exists $a_S<\infty$ such that for all $0<\epsilon<\frac{1}{2}$ and powers of two $N\ge a_S(1/\epsilon)^{\mu}$, there is a source polar code of block length $N$ and rate below $H(X|Y)+m\epsilon$ with construction time complexity $m{\rm poly}(N)$. The encoding and decoding algorithms have time complexity $O(mN\log N)$ and the error probability is at most $m2^{-N^{0.49}}$.
\end{corollary}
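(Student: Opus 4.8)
The plan is to realize the ``onion-peeling'' scheme sketched around (\ref{chain}) and then to check that each of the four complexity quantities picks up only a factor of $m$. First I would fix the binary expansion $X=(X^{(1)},\dots,X^{(m)})$ with each $X^{(i)}\in\bb$, and for every $i\in\{1,\dots,m\}$ introduce the binary source whose symbol is $X^{(i)}$ and whose side information is $\bar Y^{(i)}:=(Y,X^{(1)},\dots,X^{(i-1)})$. Since $|\mathcal X^{(i)}|=2$, Theorem \ref{thm1} applies verbatim to $P_{X^{(i)}\bar Y^{(i)}}$; the key point (recorded in the remark after Theorem \ref{thm1}) is that the exponent $\mu$ is universal, so only the prefactor, say $a^{(i)}_S$, depends on the layer. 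Taking $a_S:=\max_{1\le i\le m}a^{(i)}_S<\infty$, every power of two $N\ge a_S(1/\epsilon)^\mu$ simultaneously meets the hypothesis of Theorem \ref{thm1} for all $m$ layers, yielding for each $i$ a source polar code of block length $N$, rate $R_i<H(X^{(i)}\,|\,\bar Y^{(i)})+\epsilon$, construction time ${\rm poly}(N)$, encoding/decoding time $O(N\log N)$, and block-error at most $2^{-N^{0.49}}$.

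Next I would assemble the overall code. The encoder applies, for $i=1,\dots,m$, the length-$N$ polar transform of the $i$-th layer to $(X^{(i)}_1,\dots,X^{(i)}_N)$ and forwards the non-frozen coordinates of that layer; the decoder peels the layers off one at a time, feeding the already-recovered $\hat X^{(1:i-1),N}$ (together with $Y^N$) as side information into the successive-cancellation decoder of the $i$-th code. Summing the per-layer rates and invoking the chain rule (\ref{chain}) gives total rate below $\sum_{i=1}^m\big(H(X^{(i)}\,|\,Y,X^{(1:i-1)})+\epsilon\big)=H(X|Y)+m\epsilon$. The three complexity bounds are then immediate: we run $m$ independent code constructions (each on a marginal of $P_{XY}$, hence still ${\rm poly}(N)$ each), $m$ polar transforms, and $m$ SC decoders of length $N$, giving $m\cdot{\rm poly}(N)$ and $O(mN\log N)$ respectively.

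The one spot that needs care is the error analysis, because the decoder of layer $i$ is fed the \emph{decoded} lower layers rather than the true ones, so errors could in principle propagate. I would handle this by conditioning: let $\mathcal E_i$ be the event that the $i$-th SC decoder fails given that layers $1,\dots,i-1$ were recovered correctly; on that event the side information it sees is exactly $X^{(1:i-1),N}$, which is the distribution for which the $i$-th code was built, so Theorem \ref{thm1} gives ${\rm Pr}(\mathcal E_i)\le 2^{-N^{0.49}}$. A union bound over the $m$ layers then caps the overall error by $m\,2^{-N^{0.49}}$. I expect this conditioning/union-bound argument to be the only real (and still minor) obstacle; everything else is bookkeeping on top of Theorem \ref{thm1} and the chain rule.
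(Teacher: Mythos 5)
Your proposal is correct and follows essentially the same route as the paper: decompose $X$ into binary layers, apply Theorem \ref{thm1} to each layer with side information $(Y,X^{(1)},\dots,X^{(i-1)})$ using the universality of $\mu$ (taking $a_S$ as the maximum over layers), sum rates via the chain rule (\ref{chain}), and control error propagation with the standard genie-aided/union-bound argument, which is exactly what the paper's sketch before Corollary \ref{cor1} invokes.
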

\begin{remark}
Since any discrete random variable can have its support embedded in a set of size $2^m$ for a large enough $m$, we can use the scheme in Corollary \ref{cor1} to compress arbitrary discrete memoryless sources.
\end{remark}
As we shall see in the next section, Corollary~\ref{cor1} can be applied to key generation from general sources after a quantization step.

\section{Application to Key Agreement}
Suppose terminals A,B observe discrete memoryless sources $X_i,Y_i$ respectively, where $X_i,Y_i$ are distributed according to $P_{XY}$. A public message $W=W(X_1^N)$ can be computed at terminal A and sent to terminal B. Then terminal A, B compute their secret keys $K=K(X_1^N)$ and $\hat{K}=\hat{K}(Y_1^N,W)$, respectively. The key rate is defined as
\begin{align}
R=\log|\mathcal{K}|
\end{align}
and we say perfect secrecy is achieved if
\begin{align}
R=H(K|W).
\end{align}

When unlimited public communication from A to B is allowed, it is well known that the key capacity is $I(X;Y)$. In the case where $\mathcal{X}$ is binary, practical key agreement schemes based on polar codes have been proposed: we can apply the efficient code construction in the previous section to the scheme described in \cite{chou2013polar} to obtain the performance guarantee of polar key generation algorithm.
\subsection{Equiprobable Case}
\begin{corollary}\label{cor2}
There is a constant $0<\mu<\infty$ such that the following holds: let $P_{XY}$ be a joint distribution of the sources observed at two terminals, where $P_X$ is the equiprobable distribution on $\bb^m$. There exists $a_S<\infty$ such that for all $0<\epsilon<\frac{1}{2}$ and powers of two $N\ge a_S(1/\epsilon)^{\mu}$, there is a key generation scheme such that the public message has  block length $N$ and rate below $H(X|Y)+m\epsilon$ with construction time complexity $m{\rm poly}(N)$; the key has rate above $I(X;Y)-m\epsilon$ and the encoding and decoding algorithms have time complexity $O(mN\log N)$. Moreover, the probability of $K\neq \hat{K}$ is at most $m2^{-N^{0.49}}$ and perfect secrecy is achieved.
\end{corollary}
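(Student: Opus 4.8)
The plan is to reduce the statement to $m$ invocations of Theorem~\ref{thm1}, one for each bit of the binary expansion of $X$, run in the onion-peeling order of Corollary~\ref{cor1}, and to exploit the fact that when $P_X$ is equiprobable each layer's polar transform is an \emph{exactly} uniform i.i.d.\ string, which is precisely what delivers perfect (not merely strong) secrecy. Concretely, write $X=(X^{(1)},\dots,X^{(m)})$; equiprobability means the $X^{(j)}$ are i.i.d.\ $Ber(1/2)$, so for every $i$ the block $X^{(i),N}$ is uniform on $\bb^N$ and independent across $i$. For each $i$ apply Theorem~\ref{thm1} to the source $X^{(i)}$ with side information $(Y,X^{(1)},\dots,X^{(i-1)})$, of conditional entropy $H_i:=H(X^{(i)}\mid Y,X^{(1)},\dots,X^{(i-1)})$; setting $a_S:=\max_{i\le m}a_S^{(i)}$, for $N\ge a_S(1/\e)^\mu$ this gives a polar transform $U^{(i),N}=X^{(i),N}G_N$, a frozen set $\mathcal{F}_i\subset[N]$ with $|\mathcal{F}_i|\le(H_i+\e)N$, and a successive-cancellation decoder recovering $U^{(i),N}$ from $U^{(i),N}[\mathcal{F}_i]$ and $(Y^N,X^{(1),N},\dots,X^{(i-1),N})$ with error at most $2^{-N^{0.49}}$ in time $O(N\log N)$, all built in time ${\rm poly}(N)$.

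The scheme: Alice computes all $U^{(i),N}$, publishes $W:=(U^{(1),N}[\mathcal{F}_1],\dots,U^{(m),N}[\mathcal{F}_m])$, and sets $K:=(U^{(1),N}[\mathcal{F}_1^{\,c}],\dots,U^{(m),N}[\mathcal{F}_m^{\,c}])$; note $W_i$ and $K_i$ depend on $X^{(i),N}$ only, since $\mathcal{F}_i$ is fixed at construction. Bob, from $Y^N$ and $W$, decodes layer by layer: for $i=1,\dots,m$ he feeds the $i$-th decoder the block $W_i$ and $(Y^N,\hat X^{(1),N},\dots,\hat X^{(i-1),N})$, obtains $\hat U^{(i),N}$ hence $\hat X^{(i),N}$, and reads off $\hat K_i:=\hat U^{(i),N}[\mathcal{F}_i^{\,c}]$.

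For the quantitative claims: the public rate is $N^{-1}\sum_i|\mathcal{F}_i|\le\sum_i(H_i+\e)=H(X|Y)+m\e$ by the chain rule~(\ref{chain}); the key rate is $N^{-1}\log|\mathcal{K}|=\sum_i(1-|\mathcal{F}_i|/N)\ge\sum_i(1-H_i-\e)=m-H(X|Y)-m\e=I(X;Y)-m\e$, using $H(X)=m$; a union bound over the $m$ layers gives ${\rm Pr}(\hat X^N\ne X^N)\le m2^{-N^{0.49}}$, and since $\hat K$ is the same function of $\hat X^N$ that $K$ is of $X^N$, also ${\rm Pr}(K\ne\hat K)\le m2^{-N^{0.49}}$; the construction and en/decoding costs are $m$ times the single-layer costs. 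For perfect secrecy: $X^{(i),N}$ uniform on $\bb^N$ and $G_N$ invertible make $U^{(i),N}$ uniform on $\bb^N$, i.e.\ its coordinates i.i.d.\ $Ber(1/2)$, and the $U^{(i),N}$ are mutually independent; hence the coordinates feeding $K$ and those feeding $W$ are jointly independent and uniform, so $K$ is uniform on $\mathcal{K}$ and independent of $W$, whence $H(K|W)=H(K)=\log|\mathcal{K}|=R$.

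The rate accounting and the union bound are routine; the one delicate point — and the reason the corollary is stated only for equiprobable $P_X$ — is the perfect-secrecy step, which needs the post-transform string to be exactly (not approximately) uniform, a property that fails for non-uniform marginals and there forces the chaining/seeded constructions of \cite{chou2013polar}. One should also check two minor issues: that Bob's use of the decoded layers $\hat X^{(<i),N}$ in place of the true ones costs nothing beyond the union bound (on the event of no earlier error the $i$-th decoder sees the correct side information, so Theorem~\ref{thm1}'s guarantee applies verbatim), and that this substitution does not affect secrecy (it does not, as $W$ and $K$ are functions of $X^N$ alone).
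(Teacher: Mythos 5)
Your proposal is correct and follows essentially the same route as the paper: decompose $X$ into its $m$ binary layers, invoke the single-layer guarantee (Theorem~\ref{thm1}/Corollary~\ref{cor1}) per layer with side information $(Y,X^{(1)},\dots,X^{(i-1)})$, take the high-entropy indices as the public message and the rest as the key, and use the chain rule, a union bound over layers, and the exact uniformity of each $U^{(i),N}=X^{(i),N}G_N$ (hence independence of the key and public coordinates) for the rates, error probability, and perfect secrecy. Your write-up is in fact slightly more explicit than the paper's on the error-propagation and uniformity-of-$K$ points, but the argument is the same.
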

\begin{proof}
The coding scheme is similar to \cite[Proposition 4.2]{chou2013polar} except that now $|\mathcal{X}|=2^m$ and the performance of the polar codes is guaranteed by Corollary \ref{cor1}.

As in \ref{sec1}, we identify $X$ with $(X^{(1)},\dots,X^{(m)})$. Define $(U^{(i)})^N=G_N (X^{(i)})^N$ for $i=1\dots m$, where
\begin{align}
G_N:=\left(
                              \begin{array}{cc}
                                1 & 1 \\
                                0 & 1 \\
                              \end{array}
                            \right)^{\bigotimes n},
\end{align}
and recall that $N=2^n$.
Define the sets
\begin{align}
\mathcal{F}^{(i)}:=&\{1\le j\le n:\nonumber\\
&~~H(U^{(i)}_j|[U^{(i)}]^{j-1},[U^{(i-1)}]^N,\dots,[U^{(1)}]^N,Y^N)\nonumber\\
&~~~~<2^{-N^{0.499}}\}.
\end{align}
the purpose of setting the threshold at $2^{-N^{0.499}}$ in the above is merely that $0.499>0.49$. Now we invoke Corollary \ref{cor1} (and its proof method), and assume that $\mu$ is as in Corollary \ref{cor1}. For each $1\le i\le m$ there exists $a_S^{(i)}<\infty$ such that for all $0<\epsilon<\frac{1}{2}$ and powers of two $N\ge a_S^{(i)}(1/\epsilon)^{\mu}$, we have
\begin{align}
\frac{|[\mathcal{F}^{(i)}]^c|}{N}\le H(X^{(i)}|X^{(i-1)},\dots,X^{(1)},Y)+\epsilon.
\end{align}
Then by chain rule,
\begin{align}
\frac{|\bigcup_{i=1}^m[\mathcal{F}^{(i)}]^c|}{N}\le H(X|Y)+k\epsilon,
\end{align}
and hence
\begin{align}
\frac{|\bigcup_{i=1}^m[\mathcal{F}^{(i)}]|}{N}\ge& m-(H(X|Y)+k\epsilon)\\
=&I(X;Y)-k\epsilon.
\end{align}
if terminal A sends $U^{(i)}_{[\mathcal{F}^{(i)}]^c}$, $i=1,\dots, m$ to terminal B, then B can decode $U^{(i)}_{[\mathcal{F}^{(i)}]}$, $i=1,\dots, m$ with error probability not exceeding $2^{-N^{0.49}}$. Therefore we can use $U^{(i)}_{[\mathcal{F}^{(i)}]}$ as the key bits. Perfect secrecy is achieved because $U^{(i)}_{[\mathcal{F}^{(i)}]^c}$ and $U^{(i)}_{[\mathcal{F}^{(i)}]}$ are independent. We can set $a_S=\max_{1\le i\le m}{a^{(i)}_S}$ so that the asserted block length can be achieved. The asserted encoding and decoding complexities are guaranteed by Corollary \ref{cor1}.
\end{proof}
In the following we shall discuss how to extend the method to the case where $\mathcal{X}$ is non-binary.

\subsection{Extensions}\label{subsub1}
If $P_X$ is not an equiprobable distribution on a set of size $2^m$, the key generation scheme in Corollary~\ref{cor2} does not work directly. In this case, we can consider the following trick: produce a degraded version $\tilde{X}$ of $X$ at terminal A. This means that $\tilde{X}-X-Y$. If $\tilde{X}$ is equiprobably distributed on an alphabet of size $2^m$, then we can apply the polar coding scheme in Corollary~\ref{cor1} to the new sources $\tilde{X},Y$, achieving a key capacity of $I(\tilde{X};Y)$. If we can choose $m$ large so that $I(\tilde{X};Y)\approx I(X;Y)$, then the key rate can approach the key capacity.

For application purposes it usually suffices to consider $\tilde{X}$ to be a quantization function of $X$. A similar trick has been used to approximate the capacity of non-symmetric channels using polar codes, c.f. \cite[Section 4.3]{csacsouglu2011polarization}.

As a prominent example, we shall analyze how the above trick can be applied to the problem of key generation from correlated Gaussian sources. The quantization method used here is reminiscent of, but actually different from, the quantization method for approaching capacity of AWGN channel using polar codes discussed in \cite{abbe2011polar}.

Assume that scalar r.v.'s $X,Y$ are jointly Gaussian with correlation coefficient $\rho$. The key capacity becomes $I(X;Y)=\frac{1}{2}\log\frac{1}{1-\rho^2}$. To approach the key capacity, one can find $\tilde{X}$ such that $\tilde{X}-X-Y$ and $\tilde{X}$ is equiprobably distributed on an alphabet of size $2^m$, and then use the key generation scheme in Corollary \ref{cor2}. The following result shows that for large $m$, one can ensure that the gap between $I(Y;\tilde{X})$ and $I(Y;X)$ is of the order of $\sqrt{\frac{m}{2^m}}$.
\begin{lemma}\label{lem7}
If $X$, $Y$ are jointly Gaussian with correlation coefficient $\rho$, then for large $k$, there exists $\tilde{X}$ which is a function of $X$ and equiprobably distributed on a set of size $k$, such that
\begin{align}
I(Y;\tilde{X})\ge I(Y;X)-\frac{\log e}{2}\cdot\frac{\rho^2 C}{1-\rho^2}\sqrt{\frac{\ln k}{k}}
\end{align}
for some $C>0$. Moreover, it suffices to choose $C>\sqrt{\frac{2}{\pi}}+2\sqrt{2\pi}$.
\end{lemma}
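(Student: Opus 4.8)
The plan is to construct $\tilde{X}$ as a quantization of $X$ into $k$ cells of \emph{equal probability} $1/k$ under the Gaussian law, and then bound the information loss $I(Y;X)-I(Y;\tilde{X})$ using the chain rule $I(Y;X)=I(Y;\tilde{X})+I(Y;X\mid\tilde{X})$. Thus the whole task reduces to showing $I(Y;X\mid\tilde{X})\le \frac{\log e}{2}\cdot\frac{\rho^2 C}{1-\rho^2}\sqrt{(\ln k)/k}$. First I would recall that since $(X,Y)$ is jointly Gaussian with correlation $\rho$, we may write $Y=\rho X+\sqrt{1-\rho^2}\,Z$ with $Z\sim\mathcal{N}(0,1)$ independent of $X$ (taking unit variances WLOG). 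Then, conditioned on $\tilde{X}=t$ (i.e. on $X$ lying in the $t$-th quantization cell $I_t$), $Y$ is a scale-shifted version of $X$ plus independent noise, and $I(Y;X\mid\tilde{X}=t)$ is controlled by how much $\mathbb{E}[X\mid X\in I_t]$ fails to pin down $X$ — more precisely by $\mathrm{Var}(X\mid X\in I_t)$, since $I(Y;X\mid X\in I_t)=h(Y\mid X\in I_t)-h(Y\mid X)\le \frac{1}{2}\log\!\big(1+\tfrac{\rho^2}{1-\rho^2}\mathrm{Var}(X\mid X\in I_t)\big)\le \frac{\log e}{2}\cdot\frac{\rho^2}{1-\rho^2}\,\mathrm{Var}(X\mid X\in I_t)$, using that $Y\mid X\in I_t$ has variance $\rho^2\mathrm{Var}(X\mid X\in I_t)+(1-\rho^2)$ and the Gaussian maximizes differential entropy for a given variance.

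Averaging over $t$ with weights $1/k$, it remains to show $\frac1k\sum_{t=1}^k \mathrm{Var}(X\mid X\in I_t)\le C\sqrt{(\ln k)/k}$ for the equiprobable partition. This is the analytic heart of the argument. The idea is: almost all of the probability mass lies in a bounded interval $[-a,a]$ with $a=\Theta(\sqrt{\ln k})$ (by a Gaussian tail bound, choosing $a$ so that $\Pr(|X|>a)\lesssim 1/k$); inside $[-a,a]$ the equiprobable cells each have length at most $O(1/(k\phi(a)))=O(\sqrt{\ln k}/k)$ since the density is bounded below by $\phi(a)$ there, so each such cell contributes $\mathrm{Var}\le(\text{length})^2=O(\ln k/k^2)$, and there are $\le k$ of them, giving total contribution $O(\ln k/k)$ and hence average $O(\ln k/k^2)$ — actually better than needed. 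The cells that stick out past $\pm a$ are $O(1)$ in number but can have large (even unbounded-looking) conditional variance; however each carries mass $1/k$ and the tail of the Gaussian has $\mathbb{E}[X^2\mathbf{1}_{|X|>a}]\lesssim a\phi(a)\lesssim \sqrt{\ln k}/k$, so their \emph{contribution} to $\frac1k\sum_t\mathrm{Var}$ — which is at most $\sum_{t\text{ bdry}}\frac1k\mathbb{E}[X^2\mid X\in I_t]=\mathbb{E}[X^2\mathbf{1}_{X\in\text{bdry cells}}]\le\mathbb{E}[X^2\mathbf{1}_{|X|>a-O(\sqrt{\ln k}/k)}]$ — is $O(\sqrt{\ln k/k})$, matching the claimed rate. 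Balancing the $a=\sqrt{2\ln k}$-type choice against $\phi(a)=e^{-a^2/2}/\sqrt{2\pi}$ is what forces the $\sqrt{(\ln k)/k}$ scaling and pins the constant: the $\sqrt{2/\pi}$ piece comes from the bulk cell-length/Mills-ratio estimate and the $2\sqrt{2\pi}$ piece from the two boundary cells, which is presumably how the stated sufficient value $C>\sqrt{2/\pi}+2\sqrt{2\pi}$ arises.

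The main obstacle is precisely this second step — getting a clean, explicit upper bound on $\frac1k\sum_t\mathrm{Var}(X\mid X\in I_t)$ with an honest constant, rather than just the $O(\sqrt{(\ln k)/k})$ rate. The delicate points are: (i) choosing the cutoff $a$ (and the number of ``bulk'' vs. ``boundary'' cells) so that the bulk contribution and the tail contribution are simultaneously $O(\sqrt{(\ln k)/k})$; (ii) using a sharp form of the Gaussian tail/Mills ratio, $1-\Phi(a)\le \phi(a)/a$ and $\mathbb{E}[X^2\mathbf{1}_{X>a}]=a\phi(a)+(1-\Phi(a))$, to extract the constants; and (iii) handling that the extreme cells are half-infinite, so ``length'' must be replaced by a direct second-moment tail estimate there. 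Everything else — the reduction via the chain rule, the Gaussian-maximizes-entropy bound, and $\log(1+u)\le u\log e$ — is routine. I would present (i)--(iii) as a short self-contained lemma on equiprobable Gaussian quantization and then assemble the pieces.
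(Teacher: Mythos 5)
Your reduction is sound and genuinely different from the paper's at the information-theoretic step, yet it ends up controlling exactly the same quantity. The paper takes $\tilde{X}=\mathbb{E}[X\mid X\in Q(X)]$ on equiprobable cells, computes $\tilde{\rho}^2=\rho^2\,\mathbb{E}[\tilde{X}^2]$, and invokes the lower bound $I(Y;\tilde{X})\ge\frac{1}{2}\log\frac{1}{1-\tilde{\rho}^2}$ (Lemma~\ref{lemverdu}, valid because $Y$ is Gaussian); you instead use the chain rule and max-entropy to get $I(Y;X\mid\tilde{X})\le\frac{\log e}{2}\cdot\frac{\rho^2}{1-\rho^2}\,\mathbb{E}[\mathrm{Var}(X\mid\tilde{X})]$. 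Since $\mathbb{E}[\mathrm{Var}(X\mid\tilde{X})]=\mathbb{E}X^2-\mathbb{E}\tilde{X}^2$ for the conditional-mean quantizer, both routes yield literally the same inequality $I(Y;\tilde{X})\ge I(Y;X)-\frac{\log e}{2}\cdot\frac{\rho^2 V}{1-\rho^2}$ with $V$ the average in-cell variance, and your remaining task, $V\le C\sqrt{\ln k/k}$, is precisely the estimate of $\mathbb{E}X^2-\mathbb{E}\tilde{X}^2$ carried out in Appendix~\ref{prooflm7}. Your route has the mild advantage of not needing the auxiliary correlation lemma, at the price of the (equally standard) max-entropy and $\log(1+u)\le u\log e$ steps.

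The one concrete problem is the parameter choice in your analytic step. Writing $\phi$ for the standard normal density, if you choose $a$ so that $\Pr(|X|>a)\lesssim 1/k$, i.e.\ $a\approx\sqrt{2\ln k}$, then $\phi(a)$ is of order $\sqrt{\ln k}/k$ and the equiprobable cells adjacent to $\pm a$ have length about $1/(k\phi(a))=\Theta(1/\sqrt{\ln k})$ --- not the $O(\sqrt{\ln k}/k)$ you claim --- so bounding every bulk cell's variance by the squared maximal bulk cell length gives a bulk term of order $1/\ln k$, which swamps the target $\sqrt{\ln k/k}$; as literally sketched, the bulk estimate fails. The remedy is the paper's cutoff: take $a\approx\sqrt{\ln k}$ (tail mass about $k^{-1/2}$), so every bulk cell has length at most $\sqrt{2\pi}e^{a^2/2}/k=\sqrt{2\pi/k}$, the bulk contribution to $V$ is $O(1/k)$, and the tail contributes $\mathbb{E}\bigl[X^2\mathbf{1}_{|X|>a-O(k^{-1/2})}\bigr]\le 2a\phi(a)+2\phi(a)/a\approx\sqrt{2/\pi}\,\sqrt{\ln k/k}$. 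With your quadratic per-cell bound the constant then comes essentially only from the tail, so any $C>\sqrt{2/\pi}$ suffices on your route, a fortiori the stated $C>\sqrt{2/\pi}+2\sqrt{2\pi}$ (in the paper the $2\sqrt{2\pi}$ is the bulk term, because its bulk bound is linear rather than quadratic in the cell length; your guess about the origin of the two constants is reversed). Finally, the remark that only $O(1)$ cells stick out past $\pm a$ is inaccurate --- for this cutoff there are $\Theta(\sqrt{k/\ln k})$ of them --- but harmless, since your second-moment tail bound covers all of them at once.
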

\begin{proof}
See Appendix \ref{prooflm7}.
\end{proof}
Instead of the scalar case, if $\bf X,Y$ are vector Gaussian random variables of dimension $d$, we can always find non-degenerate linear transforms $\bf X\mapsto \bar{X}$, $\bf Y\mapsto \bar{Y}$ such that $(\bar{X}_i,\bar{Y}_i)$ are i.i.d. pairs for $i=1,\dots,d$. Then the key capacity can be achieved using the optimal strategies in the scalar case; see \cite{jingbo} for details and generalizations.

\section{Slepian-Wolf coding and multiple accessing}
Source compression with side information can readily be applied to the Slepian-Wolf coding problem, as in \cite{arikan2010source}.


The Slepian-Wolf coding problem consists in compressing correlated sources without the encoders cooperating (after the code agreement).
Let $X_1,\dots,X_n$ be i.i.d.\ under $\mu$ on $\F_2^m$, i.e., $X_i$ is an $m$ dimensional binary random vector and $X_1[i],\dots,X_n[i]$ is the sources output for user $i$.
Compressing these sources by having access to all the realizations requires roughly $nH(\mu)$ bits.
In \cite{slepian}, Slepian and Wolf showed that, even if the encoders are not able to cooperate after observing the source realizations, lossless compression can still be achieved at sum rate $H(\mu)$.

A simple way to achieve this goal is via the ``onion-peeling'' approach. Each user $i \in [m]$ computes $U^n[i]=X^n[i] G_n$ and transmits to the central decoder the non-deterministic bits of $U^n[i]$ {\it conditioned} on the previous $i-1$ source sequences:
\begin{align}
\{j \in [n]: H(U_j[i] | U^{j-1}[i], U^n[1],\dots,U^n[i-1]) \geq \e \},
\end{align}
the central decoder can then successively decode each user, replacing the previous sequences by their estimate. The sum-rate of this code approaches $$\sum_{i \in [m]} H(X[i]|X[i-1])=H(X[1],\dots,X[m]).$$ Note however that with this approach, each user is operating at a corner-point of the rate region.
Using Theorem \ref{thm1} and standard arguments to control the error propagation, the following is obtained.
\begin{corollary}\label{coro_sw}
There is a constant $0<\mu<\infty$ such that the following holds: let $P_{X[1],\dots,X[m]}$ be a joint distribution on $\bb^n$. There exists $a_S<\infty$ such that for all $0<\epsilon<\frac{1}{2}$ and powers of two $N\ge a_S(1/\epsilon)^{\mu}$, there is a polar code of block length $N$ and sum-rate below $H(X[1],\dots,X[m]|Y)+m\epsilon$ with construction time complexity $m{\rm poly}(N)$. The encoding and decoding algorithms have time complexity $O(mN\log N)$ and the error probability is at most $m2^{-N^{0.49}}$.
\end{corollary}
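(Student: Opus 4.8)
The plan is to reduce the statement to a layer-by-layer application of Theorem~\ref{thm1} and then control error propagation by a union bound, exactly in the spirit of the proofs of Corollaries~\ref{cor1} and~\ref{cor2}. Fix the joint law of $(X[1],\dots,X[m])$ together with the side information $Y$ (if no side information is present, take $Y$ deterministic), and for each $i\in[m]$ set $\bar Y_i:=(Y,X[1],\dots,X[i-1])$, a random variable on the finite alphabet $\mathcal{Y}\times\bb^{\,i-1}$. Since the source is memoryless, the pairs $\big(X_j[i],(\bar Y_i)_j\big)_{j\in[n]}$ are i.i.d.\ under the marginal of $(X[i],\bar Y_i)$, so Theorem~\ref{thm1} applies verbatim to the binary source $X[i]$ with side information $\bar Y_i$: for the universal constant $\mu$ of that theorem there is $a_S^{(i)}<\infty$ such that for every $\e\in(0,\tfrac12)$ and every power of two $N\ge a_S^{(i)}(1/\e)^{\mu}$ there is a source polar code of block length $N$, rate below $H(X[i]\mid \bar Y_i)+\e$, construction complexity $\mathrm{poly}(N)$, encoding/decoding complexity $O(N\log N)$, and decoding error at most $2^{-N^{0.49}}$ when supplied with the true realization of $\bar Y_i$. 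Concretely the $i$-th code transmits the coordinates $U^n[i]_{\mathcal{F}_i^c}$, where $U^n[i]=X^n[i]\,G_n$ and $\mathcal{F}_i$ is the low-conditional-entropy set $\{j\in[n]:H(U_j[i]\mid U^{j-1}[i],\bar Y_i^{\,n})<2^{-N^{0.499}}\}$.

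Next, set $a_S:=\max_{1\le i\le m}a_S^{(i)}$, so that for every power of two $N\ge a_S(1/\e)^{\mu}$ all $m$ layer codes exist simultaneously. The overall encoder lets each user $i$ apply its own $O(N\log N)$ polar transform and output $U^n[i]_{\mathcal{F}_i^c}$; summing the transmitted fractions and invoking the entropy chain rule,
\begin{align}
\sum_{i=1}^{m}\frac{|\mathcal{F}_i^c|}{N}
&\le\sum_{i=1}^{m}\big(H(X[i]\mid Y,X[1],\dots,X[i-1])+\e\big)\nonumber\\
&=H(X[1],\dots,X[m]\mid Y)+m\e,
\end{align}
which is the asserted sum-rate bound. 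The decoder proceeds by onion peeling: it reconstructs layer $1$ from its transmitted bits and $Y^n$, then uses the estimate $\widehat X^n[1]$ together with $Y^n$ as side information to reconstruct layer $2$, and so on, at each step replacing the earlier source sequences by their estimates.

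For the error analysis, let $E_i$ be the event that the $i$-th decoding step fails given that layers $1,\dots,i-1$ were decoded correctly. On the complement of $E_1\cup\dots\cup E_{i-1}$ the decoder feeds the \emph{true} sequence $\bar Y_i^{\,n}=(Y^n,X^n[1],\dots,X^n[i-1])$ into the $i$-th code, so $\Pr[E_i]$ equals the genie-aided error probability of that code, which is at most $2^{-N^{0.49}}$ by Theorem~\ref{thm1}. A union bound then gives overall error probability $\Pr\big[\bigcup_{i=1}^m E_i\big]\le\sum_{i=1}^m\Pr[E_i]\le m\,2^{-N^{0.49}}$. The complexity claims follow by adding the per-layer costs: construction is $m$ independent runs of the single-source construction, hence $m\cdot\mathrm{poly}(N)$; the $m$ successive polar transforms cost $O(mN\log N)$ for both encoding and decoding.

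The only genuinely delicate point is this error-propagation step, i.e.\ justifying that the onion-peeling failure probability is bounded by $\sum_i\Pr[E_i]$ with each $\Pr[E_i]$ identified with the clean, genie-aided error of Theorem~\ref{thm1}. This hinges on the observation that the frozen set $\mathcal{F}_i$ and the successive-cancellation rule for layer $i$ are defined with respect to the \emph{true} joint law of $(X[i],\bar Y_i)$, so that conditioned on all earlier layers being decoded correctly the $i$-th step is stochastically identical to running the stand-alone code of Theorem~\ref{thm1}; the nested conditioning is then absorbed by the elementary bound $\Pr[\bigcup_i E_i]\le\sum_i\Pr[E_i]$. Everything else is bookkeeping that combines Theorem~\ref{thm1} with the chain rule of entropy.
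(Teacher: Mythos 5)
Your proposal is correct and follows essentially the same route as the paper, which proves Corollary~\ref{coro_sw} exactly by this onion-peeling reduction: apply Theorem~\ref{thm1} to each layer $X[i]$ with side information $(Y,X[1],\dots,X[i-1])$, use the entropy chain rule for the sum-rate, and control error propagation by the standard genie-aided union bound. Your write-up in fact supplies the details (taking $a_S=\max_i a_S^{(i)}$, the genie argument, the per-layer complexity accounting) that the paper leaves as ``standard arguments.''
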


Using duality arguments, a similar result can be obtained for the multiple access channels, achieving rates on the corner point of the capacity region.


\section{Conclusion and Future Work}
We have studied an efficient construction of polar codes for losslessly compressing a source $X$ with side information $Y$ at the decoder, where $|\mathcal{X}|$ is a power of two. It is shown that within $\epsilon$ gap to the conditional entropy, there exist source polar codes whose block length/construction/encoding/decoding complexity are bounded by polynomials in $1/\epsilon$, extending the realm of \cite{guruswami2013polar}. The key observation is that, as in the channel setting, the bounds (\ref{ev1})-(\ref{ev2}) still holds in the source setting, even though now $X$ is not necessarily an equiprobable distribution.

Future work may include applying the efficient source polar coding techniques to other secret key generation problems, such as key generation with limited public communication, or key generation under an eavesdropper's observation. It's also worthwhile to extend theorem~\ref{thm1} to prime alphabets. The main difficulty in such an extension is the lack of a prime alphabet counterpart of the lower bound in \eqref{ev2_1} for one step evolution of Bhattacharyya coefficient. It is possible to replace the analysis based on Bhattacharyya coefficient in the rough polarization stage with an analysis based on entropy \cite{guruswami2013polarnew}. Although an inequality regarding one step evolution of the entropy is known for prime alphabets \cite{sasoglu2010entropy}, it is not strong enough to be applied to Theorem~\ref{thm1}. More precisely, Theorem~\ref{thm1} requires an inequality in the form of Theorem~2 in \cite{sasoglu2010entropy} with $\epsilon(\delta)\gtrsim\delta$, which is not guaranteed from the proof technique of \cite{sasoglu2010entropy}. Another interesting direction is to pursue the polarization for Slepian-Wolf and multiple accessing using a joint decoding (and not onion-peeling) as in \cite{polar_corr,csacsouglu2011polarization,abbe2012polar}. It is conceivable that joint decoding alleviates the error propagation compared to the onion peeling approach, and thus reduces the error probability.
\appendices
\section{Proof of Lemma \ref{lem7}}\label{prooflm7}
We shall use without a proof the following basic result: 
\begin{lemma}\label{lemverdu}
Suppose $Y$ is Gaussian, and its correlation coefficient with $\tilde{X}$ is $\tilde{\rho}$, then
\begin{align}
\frac{1}{2}\log\frac{1}{1-\tilde{\rho}^2}\le I(Y;\tilde{X}).
\end{align}
\end{lemma}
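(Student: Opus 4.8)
The plan is to write $I(Y;\tilde{X}) = h(Y) - h(Y|\tilde{X})$ and to control the two differential-entropy terms separately using the maximum-entropy property of the Gaussian together with the optimality of the conditional mean as an estimator. Since $Y$ is Gaussian, say with variance $\sigma^2$, the first term is known exactly: $h(Y) = \frac{1}{2}\log(2\pi e\,\sigma^2)$ (the logarithm base is immaterial as long as it is used consistently, since it cancels at the end). All the real work therefore goes into producing a matching upper bound on the conditional differential entropy $h(Y\mid\tilde{X})$.

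First I would bound $h(Y\mid\tilde{X})$ by conditioning on each realization. For a fixed value $\tilde{X}=x$, let $v(x):=\mathrm{Var}(Y\mid\tilde{X}=x)$; because the Gaussian maximizes differential entropy among all laws of a prescribed variance, $h(Y\mid\tilde{X}=x)\le \frac{1}{2}\log\bigl(2\pi e\,v(x)\bigr)$. Averaging over $\tilde{X}$ and applying Jensen's inequality to the concave function $\log$, I would pull the expectation inside the logarithm to get $h(Y\mid\tilde{X})\le \frac{1}{2}\log\bigl(2\pi e\,\mathbb{E}[v(\tilde{X})]\bigr)$. The quantity $\mathbb{E}[v(\tilde{X})] = \mathbb{E}\bigl[(Y-\mathbb{E}[Y\mid\tilde{X}])^2\bigr]$ is precisely the minimum mean-square error in estimating $Y$ from $\tilde{X}$.

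Next I would bound this MMSE by the \emph{linear} MMSE: since the conditional mean $\mathbb{E}[Y\mid\tilde{X}]$ minimizes mean-square error over \emph{all} estimators, it performs at least as well as the best affine estimator of $Y$ from $\tilde{X}$, whose residual variance equals $\sigma^2(1-\tilde{\rho}^2)$ directly from the definition of the correlation coefficient $\tilde{\rho}$. Hence $\mathbb{E}[v(\tilde{X})]\le \sigma^2(1-\tilde{\rho}^2)$, giving $h(Y\mid\tilde{X})\le \frac{1}{2}\log\bigl(2\pi e\,\sigma^2(1-\tilde{\rho}^2)\bigr)$. Subtracting this from $h(Y)$ cancels the common factor $2\pi e\,\sigma^2$ and yields $I(Y;\tilde{X})\ge \frac{1}{2}\log\frac{1}{1-\tilde{\rho}^2}$, as asserted.

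The individual steps are routine, so the result is not deep; the only point demanding care is that $\tilde{X}$ is allowed to be discrete (in the intended application it is a quantization of a jointly Gaussian $X$), so the pair $(Y,\tilde{X})$ is mixed continuous/discrete. I must therefore check that the decomposition $I(Y;\tilde{X})=h(Y)-h(Y\mid\tilde{X})$ is valid in this mixed setting and that, for each $x$, the conditional law of $Y$ admits a density of finite variance $v(x)$ so that the conditional max-entropy bound applies pointwise. These are standard facts once noted, and everything else reduces to two appeals to Gaussian/conditional-mean optimality plus Jensen's inequality.
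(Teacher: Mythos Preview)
Your argument is correct: the chain $h(Y)-h(Y\mid\tilde X)\ge h(Y)-\tfrac12\log\bigl(2\pi e\,\mathrm{MMSE}(Y\mid\tilde X)\bigr)\ge h(Y)-\tfrac12\log\bigl(2\pi e\,\mathrm{LMMSE}(Y\mid\tilde X)\bigr)=\tfrac12\log\frac{1}{1-\tilde\rho^2}$ is the standard proof, and your remarks about the mixed discrete/continuous setting are appropriate (the decomposition $I(Y;\tilde X)=h(Y)-h(Y\mid\tilde X)$ only requires $Y$ to have a density, which it does).

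As for comparison with the paper: the paper does not prove this lemma at all---it introduces it with ``We shall use without a proof the following basic result''---so your write-up supplies exactly what the paper omits rather than duplicating or diverging from an existing argument.
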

Without loss of generality, assume that $X$ and $Y$ are of zero mean and unit variance. Partition the real lines with intervals $[a_{i-1},a_{i})$, $i=1,\dots k$ such that
\begin{align}
P_X([a_{i-1},a_{i}))=&\frac{1}{k},\quad i=1,\dots k,\\
a_0=&-\infty,\\
a_k=&\infty.
\end{align}
For $x\in\mathbb{R}$, let $Q(x)$ be the $[a_{i-1},a_{i})$ interval which $x$ belongs to. Define $\tilde{X}$ as a function of $X$, via
\begin{align}
\tilde{X}(x)=\mathbb{E}[X|X\in Q(x)].
\end{align}
It's easy to see that $P_{\tilde{X}}$ is equiprobable on a set of size $k$, and that $\mathbb{E}\tilde{X}^2\le \mathbb{E}X^2$. Note that
\begin{align}
\mathbb{E}[Y\tilde{X}]=&\mathbb{E}[\mathbb{E}[Y\tilde{X}|X]]\nonumber\\
=&\mathbb{E}[\rho X\tilde{X}]\nonumber\\
=&\rho\mathbb{E}[\mathbb{E}[ X\tilde{X}|\tilde{X}]]\\
=&\rho\mathbb{E}[\tilde{X}^2].
\end{align}
Hence the correlation coefficient between $\tilde{X}$ and $Y$ is
\begin{align}
\tilde{\rho}=&\frac{\rho\mathbb{E}[\tilde{X}^2]}{\sqrt{\mathbb{E}[\tilde{X}^2]}}\\
=&\rho\sqrt{\mathbb{E}[\tilde{X}^2]}.
\end{align}
Choose a number $A>0$ such that $A=a_i$ for some $i\neq k$. Integrating by parts, we have
\begin{align}\label{e26}
\int_{A}^{\infty}\frac{1}{\sqrt{2\pi}}e^{-\frac{x^2}{2}}x^2{\rm d}x
=&\frac{A}{\sqrt{2\pi}}e^{-\frac{A^2}{2}}+\frac{1}{\sqrt{2\pi}}\int_{A}^{\infty}e^{-\frac{x^2}{2}}{\rm d}x\\
\le&\frac{A}{\sqrt{2\pi}}e^{-\frac{A^2}{2}}+\frac{e^{-\frac{A^2}{2}}}{A\sqrt{2\pi}},
\end{align}
where we have used the standard bound for Gaussian cdf in the inequality. On the other hand, the length of quantization intervals in region $[-A,A]$ can be upper bounded by
\begin{align}
|a_{i-1}-a_{i}|\le& P_X(A)^{-1}\frac{1}{k}\\
=&\frac{\sqrt{2\pi}e^{\frac{A^2}{2}}}{k}.
\end{align}
This implies that
\begin{align}
\int_{-A}^{A}\frac{1}{\sqrt{2\pi}}e^{-\frac{x^2}{2}}x^2{\rm d}x-\mathbb{E}\tilde{X}^2
\le& 2\cdot\frac{\sqrt{2\pi}e^{\frac{A^2}{2}}}{k}\cdot A.
\end{align}
Combining with (\ref{e26}), we obtain
\begin{align}\label{e37}
\mathbb{E}X^2-\mathbb{E}\tilde{X}^2\le \frac{2A}{\sqrt{2\pi}}e^{-\frac{A^2}{2}}+\frac{2e^{-\frac{A^2}{2}}}{A\sqrt{2\pi}}
+2\cdot\frac{\sqrt{2\pi}e^{\frac{A^2}{2}}}{k}\cdot A
\end{align}
Now define $A_0:=\sqrt{\ln k}$ and $A:=\max\{a_i:a_i\le A_0,i=0\dots k\}$. Then (\ref{e37}) implies that
\begin{align}
&\mathbb{E}X^2-\mathbb{E}\tilde{X}^2\nonumber\\
\le& \left(\frac{2A}{\sqrt{2\pi}}+\frac{2}{A\sqrt{2\pi}}\right)e^{-\frac{A^2}{2}}
+2A\frac{\sqrt{2\pi}e^{A^2}}{k}\\
\le& \left(\frac{2A_0}{\sqrt{2\pi}}+\frac{2}{(A_0-\sqrt{\frac{2\pi}{k}})\sqrt{2\pi}}\right)e^{-\frac{(A_0-\sqrt{\frac{2\pi}{k}})^2}{2}}\nonumber\\
&+2A_0\frac{\sqrt{2\pi}e^{A_0^2}}{k}\\
=&\left(\frac{2A_0}{\sqrt{2\pi}}+\frac{2}{A_0\sqrt{2\pi}}\right)e^{-\frac{A_0^2}{2}}
+2A_0\frac{\sqrt{2\pi}e^{A_0^2}}{k}+o(\frac{1}{\sqrt{k}})\\
\le& C\sqrt{\frac{\ln k}{k}},
\end{align}
where the last step holds for any $C>\sqrt{\frac{2}{\pi}}+2\sqrt{2\pi}$ and sufficently large $k$. Using (\ref{lemverdu}), we obtain
\begin{align}
I(Y;\tilde{X})\ge&\frac{1}{2}\log\frac{1}{1-\tilde{\rho}^2}\\
\ge&\frac{1}{2}\log\frac{1}{1-\rho^2}-\frac{\log e}{2}\cdot\frac{\rho^2 C}{1-\rho^2}\sqrt{\frac{\ln k}{k}}\\
=&I(Y;X)-\frac{\log e}{2}\cdot\frac{\rho^2 C}{1-\rho^2}\sqrt{\frac{\ln k}{k}}
\end{align}
for sufficiently large $k$.

\bibliographystyle{ieeetr}
\bibliography{ref529}
\end{document}